\newrobustcmd{\angelo}[2][]{{\color{black}\todo[color=cyan!30,#1]{{\bf Angelo:} #2}}}
\newrobustcmd{\dario}[2][]{{\color{black}\todo[color=blue!30,#1]{{\bf Dario:} #2}}}
\newrobustcmd{\gabriele}[2][]{{\color{black}\todo[color=magenta!25,#1]{{\bf Gabriele:} #2}}}
\newrobustcmd{\pietro}[2][]{{\color{black}\todo[color=yellow!20,#1]{{\bf Pietro:} #2}}}
\renewcommand\paragraph[1]{\medskip\noindent\emph{#1.}~}
\newsavebox{\wraptext}
\newsavebox{\wrapfig}
\newsavebox{\deftext}
\savebox{\deftext}{Xp}
\patchcmd\@combinedblfloats{\box\@outputbox}{\unvbox\@outputbox}{}{%
}%
\tikzset{every node/.style={font=\vphantom{Ag}, minimum size=1mm, inner sep=0pt, outer sep=0pt}}
\tikzstyle{reverseclip}=[insert path={(current page.north east) --
\tikzstyle{short}=[shorten >=8pt, shorten <=8pt]
\tikzstyle{dot}=[draw, shape=circle, minimum size=1.5mm, fill, 
\tikzstyle{bluebox}=[shape=rectangle, fill=cyan!50, minimum size=4mm, inner sep=2pt, outer sep=0pt]
\tikzstyle{ubrace} = [draw, thick, decoration={brace, mirror, raise=0.0cm}, decorate, 
\tikzstyle{outline} = [preaction={-, draw, shorten >=2pt, shorten <=2pt, line width=2pt, white}]
\tikzstyle{arrow} = [shorten >=5pt, shorten <=0pt, arrows = {->[bend,scale=1.5]}]
\tikzstyle{directededge} = [shorten >=5pt, shorten <=0pt, arrows = {-latex'}]
\tikzstyle{over} = [preaction={-, draw, line width=8pt, white}]
\tikzstyle{thinover} = [preaction={-, draw, line width=4pt, white}]
\tikzstyle{partialarrow} = [shorten >=5pt, shorten <=0pt, arrows = {->[bend,round,scale=1.5,harpoon]}]
\tikzstyle{surjectivearrow} = [shorten >=5pt, shorten <=0pt, arrows = {-To[bend,round,scale=1.5]To[bend,round,scale=1.5]}]
\tikzstyle{injectivearrow} = [shorten >=5pt, shorten <=0pt, arrows = {>[bend,round,scale=1.5]->[bend,round,scale=1.5]}]
\tikzstyle{mapsto} = [shorten >=5pt, shorten <=0pt, |->]
\tikzstyle{loop above} = [shorten >=3pt, shorten <=3pt, loop, looseness=10, in=60, out=120]
\tikzstyle{loop below} = [shorten >=3pt, shorten <=3pt, loop, looseness=10, in=-60, out=-120]
\tikzstyle{brace} = [shorten >=0pt, shorten <=0pt, decorate, 
\tikzstyle{above round brace} = [inherit options/.code=
\tikzstyle{below round brace} = [inherit options/.code=
\definecolor{nicecyan}{HTML}{006165}
\definecolor{nicered}{HTML}{DB3A34}
\definecolor{nicegreen}{HTML}{6D972E}
\newcommand\timedomain{"N@time domain"}
\newcommand\timedomainbis{"N'@time domain"}
\newcommand\timedomain*{N}
\newcommand\timedomainbis*{N'}
\newcommand\intervaldomain{"\bbI(\timedomain*)@interval domain"}
\newcommand\intervaldomainbis{"\bbI(\timedomainbis*)@interval domain"}
\newcommand\intervaldomain*{\bbI(\timedomain*)}
\newcommand\intervaldomainbis*{\bbI(\timedomainbis*)}
\newcommand\intervalstructure{"(\intervaldomain*,\sigma)@interval structure"}
\newcommand\intervalstructurebis{"(\intervaldomainbis*,\sigma')@interval structure"}
\newcommand\intervalstructure*{(\intervaldomain*,\sigma)}
\newcommand\intervalstructurebis*{(\intervaldomainbis*,\sigma')}
\newcommand\prefix{\mathbin{"<_B@prefix relation"}}
\newcommand\suffix{\mathbin{"<_E@suffix relation"}}
\newcommand\signature{"\Sigma@signature"}
\newcommand\signaturebis{"\Sigma'@signature"}
\newcommand\signature*{\Sigma}
\newcommand\signaturebis*{\Sigma'}
\newcommand{\BE}{"\mathsf{BE}@\BE"}
\newcommand{\BEpi}{"{\mathsf{BE}}_{\pi*}@\BEpi"}
\newcommand{\C}{"\mathsf{C}@\C"}
\newcommand{\D}{"\mathsf{D}@\D"}
\newcommand{\HS}{"\mathsf{HS}@\HS"}
\newcommand\BE*{\mathsf{BE}}
\newcommand\BEpi*{{\mathsf{BE}}_{\pi*}}
\newcommand\C*{\mathsf{C}}
\newcommand\D*{\mathsf{D}}
\newcommand\HS*{\mathsf{HS}}
\newcommand\hsB{\mathop{"\langle B\rangle@\hsB"}}
\newcommand\hsE{\mathop{"\langle E\rangle@\hsE"}}
\newcommand\hsBu{\mathop{"[B]@\hsBu"}}
\newcommand\hsEu{\mathop{"[E]@\hsEu"}}
\newcommand\hsX{\mathop{"\langle X\rangle@\hsB"}}
\newcommand\hsG{\mathop{"\langle G\rangle@\hsG"}}
\newcommand\hsGu{\mathop{"[G]@\hsGu"}}
\newcommand\hsC{\mathbin{"\langle C\rangle@\hsB"}}
\newcommand\hsD{\mathop{"\langle D\rangle@\hsB"}}
\newcommand\hsB*{\mathop{\langle B\rangle}}
\newcommand\hsE*{\mathop{\langle E\rangle}}
\newcommand\hsBu*{\mathop{[B]}}
\newcommand\hsEu*{\mathop{[E]}}
\newcommand\hsX*{\mathop{\langle X\rangle}}
\newcommand\hsG*{\mathop{\langle G\rangle}}
\newcommand\hsGu*{\mathop{[G]}}
\newcommand\hsC*{\mathbin{\langle C\rangle}}
\newcommand\hsD*{\mathop{\langle D\rangle}}
\newcommand\dec[1]{"\mathrm{flat}(#1)@\dec"}
\newcommand\enc[1]{"\mathrm{enc}(#1)@\enc"}
\newcommand\dec*[1]{\mathrm{flat}(#1)}
\newcommand\enc*[1]{\mathrm{enc}(#1)}
\let\oldmodels\models
\renewcommand\models{\mathbin{"\oldmodels@\models"}}
\newcommand\models*{\mathbin{\oldmodels}}
\newcommand\true{"\mathrm{true}@\true"}
\newcommand\true*{\mathrm{true}}
\newcommand\false{"\mathrm{false}@\false"}
\newcommand\false*{\mathrm{false}}
\let\oldpi\pi
\renewcommand\pi{"\oldpi@\pi"}
\newcommand\pi*{\oldpi}
\let\oldPhi\Phi
\renewcommand\Phi{"\oldPhi@\Phi"}
\newcommand\Phi*{\oldPhi}
\newcommand\Args[1]{"{\boldsymbol\partial\mspace{-2mu}}^\star{#1}\mspace{2mu}@\Args"}
\newcommand\BArgs[1]{"{\boldsymbol\partial\mspace{-2mu}}_B{#1}\mspace{2mu}@\BArgs"}
\newcommand\EArgs[1]{"{\boldsymbol\partial\mspace{-2mu}}_E{#1}\mspace{2mu}@\EArgs"}
\newcommand\XArgs[1]{"{\boldsymbol\partial\mspace{-2mu}}_X{#1}\mspace{2mu}@\BArgs"}
\newcommand\YArgs[1]{"{\boldsymbol\partial\mspace{-2mu}}_Y{#1}\mspace{2mu}@\BArgs"}
\newcommand\Depth[2]{"\mathrm{Depth}^{\le#1}(#2)@\Depth"}
\newcommand\Args*[1]{{\boldsymbol\partial\mspace{-2mu}}^\star{#1}\mspace{2mu}}
\newcommand\BArgs*{{\boldsymbol\partial\mspace{-2mu}}_B\mspace{2mu}}
\newcommand\EArgs*{{\boldsymbol\partial\mspace{-2mu}}_E\mspace{2mu}}
\newcommand\XArgs*{{\boldsymbol\partial\mspace{-2mu}}_X\mspace{2mu}}
\newcommand\YArgs*{{\boldsymbol\partial\mspace{-2mu}}_Y\mspace{2mu}}
\newcommand\Depth*[2]{\mathrm{Depth}^{\le#1}(#2)}
\newcommand\type[2]{"\mathrm{type}^{#1}_{#2}@depth-$#1$ type"}
\newcommand\type*[2]{\mathrm{type}^{#1}_{#2}}
\newcommand\lenone{"\texttt{1}@\lenone"}
\newcommand\lenone*{\texttt{1}}
\newcommand\lentwo{"\texttt{2}@\lentwo"}
\newcommand\lentwo*{\texttt{2}}
\newcommand\lenthree{"\texttt{3}@\lenthree"}
\newcommand\lenthree*{\texttt{3}}
\let\oldcdot\cdot
\renewcommand\cdot[1]{\mathbin{"\oldcdot@depth-$#1$ composition"}}
\newcommand\cdot*{\oldcdot}
\let\oldvdash\vdash
\renewcommand\vdash{\mathbin{"\oldvdash@\vdash"}}
\newcommand\vdashbis{\mathbin{"\oldvdash@\vdashbis"}}
\newcommand\vdash*{\mathbin{\oldvdash}}
\newcommand\vdashbis*{\mathbin{\oldvdash}}
\newcommand\dummy{"\emptystr@dummy"}
\newcommand\dummy*{\emptystr}
\newcommand\comp[1]{"c(#1)@\comp"}
\newcommand\comp*[1]{c(#1)}
\newcommand\leftDeltaIntervals[1]{"\mathrm{left}\text{-}\Delta_{#1}@\leftDeltaIntervals"}
\newcommand\leftDeltaIntervals*[1]{\mathrm{left}\text{-}\Delta_{#1}}
\newcommand\rightDeltaIntervals[1]{"\mathrm{right}\text{-}\Delta_{#1}@\rightDeltaIntervals"}
\newcommand\rightDeltaIntervals*[1]{\mathrm{right}\text{-}\Delta_{#1}}
\newcommand\encodedword[1]{"w_{#1}@encoding"}
\newcommand\encodedword*[1]{w_{#1}}
\title{The Logic of Prefixes and Suffixes \\ 
       is Elementary under Homogeneity
       \thanks{ 
               We acknowledge the support 
               from the 2022 Italian INdAM-GNCS project 
               \emph{``Elaborazione del Linguaggio Naturale e Logica Temporale per la Formalizzazione di Testi''},
               ref.~no.~\texttt{CUP\_E55F22000270001}.
               We would also like to thank Alberto Molinari, Laura Bozzelli, and Adriano Peron for many useful discussions. Finally, Angelo Montanari would like to acknowledge the work done on the  problem, together with Gabriele Puppis and Pietro Sala, when he was on leave at LaBRI in Bordeaux. }
      }
\author{Dario Della Monica \\ University of Udine \\
        \url{dario.dellamonica@uniud.it} \and
        Angelo Montanari \\ University of Udine \\ 
        \url{angelo.montanari@uniud.it} \and
        Gabriele Puppis \\ University of Udine \\
        \url{gabriele.puppis@uniud.it} \and
        Pietro Sala \\ University of Versona \\
        \url{pietro.sala@univr.it}}
\date{}
\begin{document}

\maketitle
\sloppy

\begin{abstract}
%
In this paper, we study the finite satisfiability problem for the "logic $\BE*$" 
under the homogeneity assumption. $\BE$ is the cornerstone of Halpern and Shoham's 
interval temporal logic, and features modal operators corresponding to the prefix
(a.k.a.~``Begins'') and suffix (a.k.a.~``Ends'') relations on intervals. 
In terms of complexity, $\BE$ lies in between the "``Chop'' logic $\C*$", 
whose satisfiability problem is known to be non-elementary, and the 
$\pspace$-complete interval "logic $\D*$" of the sub-interval (a.k.a.~``During'') 
relation. $\BE$ was shown to be $\expspace$-hard, and the only known
satisfiability procedure is primitive recursive, but not elementary.
Our contribution consists of tightening the complexity bounds of the satisfiability 
problem for $\BE$, by proving it to be $\expspace$-complete. 
We do so by devising an equi-satisfiable normal form with boundedly many nested modalities. 
The normalization technique resembles Scott's quantifier elimination, but it turns 
out to be much more involved due to the limitations enforced by the homogeneity assumption.

\end{abstract}


\section{Introduction}\label{sec:intro}

\knowledge{notion}
    | ``Chop'' logic $\C*$
    | Chop logic $\C*$
    | logic $\C*$
	| \C
\knowledge{notion}
    | logic $\D*$
	| \D
\knowledge{notion}
    | logic $\HS*$
    | $\HS*$ logic
	| \HS
	| \HS*
\knowledge{notion}
	| model-checking problem
	| model-checking

In this paper, we study the computational complexity of the 
"satisfiability problem" for the "logic $\BE*$" of the 
"prefix" and "suffix" interval relations.
The considered interpretation setting is the one with intervals 
over finite linear orders, under the "homogeneity" assumption 
(see below). 
The "logic $\BE*$" is at the core of the galaxy of interval 
temporal logics \cite{hs91} and has interesting connections with standard 
point-based temporal logics \cite{DBLP:journals/tocl/BozzelliMMPS19}. 
In general, formulas of interval temporal logics can express 
properties of \emph{pairs} of time points, rather than 
properties of single time points, and are evaluated as 
sets of such pairs, that is, as binary relations on points. 
They are very expressive in comparison to point-based ones, 
and it does not come as a surprise that, in general, there 
is no reduction of their "satisfiability problem" to 
"satisfiability" of classical monadic second-order logic. 

\AP
The \reintro{logic $\BE*$} has two (unary) modalities, 
$\reintro{\hsB*}$ and $\reintro{\hsE*}$, 
that quantify over "prefixes" and "suffixes" of the 
current interval, respectively. 
These modalities can be viewed as the logical counterparts of 
Allen's binary relations \emph{Begins} and \emph{Ends}~\cite{all83}.
\AP
In particular, the "logic $\BE*$" can be considered as a fragment of 
Halpern and Shoham's interval temporal logic~\cite{hs91}, 
denoted $""\HS*""$, which features one modal operator for 
each of the twelve non-trivial Allen's relations.

The "satisfiability problem" for $\BE$ turns out to be 
\emph{undecidable} over all relevant classes of 
"interval structures"~\cite{dblp:journals/amai/bresolinmgms14,lod00}. 
One can however escape this bleak landscape by 
constraining the semantics, in particular, the 
interpretation of the propositional letters.
An interesting example is given by the \emph{"homogeneity"}
assumption, according to which a propositional letter 
"holds at" an interval if and only if it "holds at" 
all of its points. 
In other words, according to the "homogeneous" semantics, 
the labelling of an arbitrary interval in a model is 
uniquely determined by those of the singleton intervals 
contained in it.

An advantage of the "homogeneity" assumption is that
it makes it possible to define a natural interpretation of 
interval logics over Kripke structures. 
\AP
For example, this 
comes in handy when studying the ""model-checking problem"",
which is defined as the problem of deciding whether a given 
formula is "valid over" all ("homogeneous") "interval structures" 
generated by a given Kripke structure. 
As such, the "problem@model-checking" can be seen as a variant 
of the classical "validity"/"satisfiability problem", and many 
decidability and complexity results can be transferred 
from one problem to another.
In \cite{DBLP:journals/acta/MolinariMMPP16}
it was shown that the "model-checking" and "satisfiability" 
problems for $\BE$, and in fact for full $\HS$ logic,
become decidable when one restricts to 
\emph{homogeneous} "interval structures".
%

Despite its simple syntax and the "homogeneity" assumption,
the "logic $\BE*$" turns out to be quite expressive and succinct.
In~\cite{DBLP:journals/tocl/BozzelliMMPS19}, 
Bozzelli et al.~have shown that, when interpreted over 
finite words, LTL (Linear Temporal Logic) and $\BE$, 
under "homogeneity", define the same class of 
star-free regular languages, but with the latter formalism 
being at least exponentially more succinct than the former.
This is also reflected in the complexity of the 
"satisfiability problem" for $\BE$,
which was shown to be 
$\expspace$-hard~\cite[Theorem 3.1]{DBLP:journals/tcs/BozzelliMMPS19}.%
\footnote{In fact, the cited result focuses on the 
          "model-checking problem" for $\BE$,
          which takes as input, not only a formula, 
          but also a Kripke structure.
          It happens that the Kripke structure 
          used in the proof of the $\expspace$ 
          lowerbound generates every possible 
          "homogeneous" "interval structure", 
          and hence the result can be immediately
          transferred to the "validity"/"satisfiability problem" 
          for $\BE$.}
\AP
On the other hand, the only known decision procedure 
~\cite{DBLP:journals/acta/MolinariMMPP16}
for "satisfiability" of $\BE$ formulas 
is basically the one for full $\HS$, 
and is not elementary. 

\AP
It is also worth contrasting the expressiveness and 
complexity of $\BE$, under "homogeneity", with those 
of two close relatives of it: the 
"Chop logic $\C*$"~\cite{chopping_intervals}
and the "logic $\D*$" of the sub-interval relation~\cite{LMCS2022}.
\AP
The ""logic $\C*$"" has a binary modality $\hsC*$ 
that allows one to split the current interval 
in two parts and predicate separately on them. 
\AP
The ""logic $\D*$"" has a unary modality $\hsD*$ 
that allows one to predicate 
about sub-intervals of the current interval. 
It is easy to see that, in terms of expressiveness, 
$\BE$ lies in between $\D$ and $\C$, in the sense 
that modality $\hsD$ can be defined 
in $\BE$, i.e.,~$\hsD\varphi$ 
is "equivalent" to $\hsB\hsE\varphi$,
and modalities $\hsB$ and $\hsE$ can in turn be defined 
in $\C$, e.g.,~$\hsB\varphi$ is "equivalent" to $\varphi \mathop{\hsC} \true$.
Under the "homogeneity" assumption, 
the "satisfiability problem" for $\C$ is non-elementarily 
decidable, precisely, tower-complete, in view of the existence of 
straightforward reductions to and from language-emptiness of generalized 
star-free regular expressions~\cite{Schmitz:2016,stockmeyer1974},
while the "satisfiability problem" for $\D$ 
was shown to be $\pspace$-complete by a suitable 
contraction method~\cite{LMCS2022}.
It is also worth pointing out that if the "homogeneity" 
assumption is removed, the "satisfiability problem" for 
$\D$ becomes undecidable~\cite{DBLP:journals/fuin/MarcinkowskiM14}.


Based on the observations above,
it is crucial
to close the gap between the complexity lowerbound and 
upperbound of the "satisfiability problem" for $\BE$.
Significant effort has been invested in recent years 
towards both raising the $\expspace$ lowerbound,
e.g.~using variants of Stockmeyer's counters~\cite{stockmeyer1974},
and developing elementary "satisfiability" procedures.
Despite these efforts, the complexity gap remained unchanged
and proved to be an intriguing challenge. 
The special status of $\BE$ is witnessed by the fact that 
the many results about the complexity of the "satisfiability" 
and/or "model-checking" problems for proper fragments of $\HS$, 
under the "homogeneity" assumption, concern logics that include 
neither modality $\reintro{\hsB*}$ nor modality $\reintro{\hsE*}$ 
or feature only one of them (an up-to-date picture can be found in \cite{DBLP:journals/corr/abs-2109-08320,DBLP:conf/time/BozzelliMPS21}).

In this paper, we manage to prove that the "satisfiability problem" 
for the logic $\BE$, under "homogeneity", is elementarily decidable, 
and precisely $\expspace$-complete.
This result is established using a rather unexpected 
normalization technique, which consists of transforming 
an arbitrary $\BE$ formula into an "equi-satisfiable" one 
with boundedly many nested modalities. 
Specifically, we will show that one can compute, 
in polynomial time, "normalized@shallow normal form" 
formulas with nesting "depth" of modalities at most 4, 
and with at most 2 alternations between universal and 
existential modalities.
The transformation of $\BE$ formulas into "normalized" 
ones can be also viewed as a quantifier elimination 
technique à-la Scott~\cite{Scott1962}.
In this perspective, however, the transformation 
has to deal with an increased difficulty: 
due to the "homogeneity" assumption, the elements 
over which we predicate cannot be labelled in an 
arbitrary way. 
In view of this difficulty, it is quite surprising 
that an "equi-satisfiable" "normalized"
formula can be computed in polynomial time from any given 
arbitrary $\BE$ formula.

The rest of the paper is organized as follows. 
In Section \ref{sec:prelims}, we introduce 
the logic $\BE$ and we point out the relevant 
implications of the "homogeneity" assumption.  
In Section \ref{sec:translation}, we define the 
transformation of $\BE$ formulas into "normalized" ones.  
In Section \ref{sec:complexity}, we derive an 
optimal "satisfiability" procedure and analyse 
its complexity.
Conclusions provide an assessment of the work done 
and outline future research directions.
For reader convenience, technical terms and notation 
in the electronic version of the paper 
are linked to their definitions, which can then be 
accessed with a mouse click.


\section{Preliminaries}\label{sec:prelims}

\knowledge{notion}
	| time domain
	| (\timedomain*,<)
\knowledge{notion}
	| interval domain 
	| \intervaldomain*
\knowledge{notion}
	| interval structure
	| interval structures
	| structure
	| structures
\knowledge{notion}
	| prefix relation
	| proper prefix
	| proper prefixes
	| prefix
	| prefixes
\knowledge{notion}
	| suffix relation
	| proper suffix
	| proper suffixes
	| suffix
	| suffixes
\knowledge{notion}
	| signature
	| signatures
\knowledge{notion}
    | logic $\BE*$
	| \BE
\knowledge{notion}
    | logic $\BEpi*$
	| \BEpi
\knowledge{notion}
	| \hsB
	| \hsB*
\knowledge{notion}
	| \hsE
	| \hsE*
\knowledge{notion}
	| \hsBu
\knowledge{notion}
	| \hsEu
\knowledge{notion}
	| \models
	| interpreted
	| satisfies
	| satisfy
	| satisfied by
	| satisfying
	| holds at
	| hold at
	| held at
	| model
	| models
	| modelled
	| evaluates
	| evaluated
	| evaluation
	| evaluations
\knowledge{notion}
	| valid
	| validity
\knowledge{notion}
	| valid over
	| validity over
\knowledge{notion}
	| satisfiable
	| satisfiability
	| satisfiability problem
\knowledge{notion}
	| equivalent
	| equivalence
\knowledge{notion}
	| equi-satisfiable
	| equi-satisfiability
\knowledge{notion}
	| \true
\knowledge{notion}
	| \false
\knowledge{notion}
	| \pi
\knowledge{notion}
	| \hsGu
\knowledge{notion}
	| homogeneous
	| homogeneity
	| Homogeneity
\knowledge{notion}
	| homogeneous normal form

\AP
Let the ""time domain"" be a finite prefix of the natural numbers $"(\timedomain*,<)"$.
\AP
Intervals over $\timedomain$ are denoted by $[x,y]$, for $x,y\in \timedomain$ and $x\le y$,
and the set of all intervals over $\timedomain$ is denoted 
$""\intervaldomain*""$. 
\AP
We let $\prefix$ (resp., $\suffix$) be the proper ""prefix""
(resp., ""suffix"") relation on intervals, 
defined by $J \prefix I$ if and only if $\min(I)=\min(J)\le\max(J)<\max(I)$ 
(resp., $J \suffix I$ if and only if $\min(I)<\min(J)\le\max(J)=\max(I)$). 

\AP
Formulas of the ""logic $\BE*$"" are constructed starting from 
propositional letters belonging to a finite non-empty set $\signature$, 
called ""signature"", using classical Boolean connectives and modal 
operators. The latter operators are used to quantify over "prefixes" 
and "suffixes" of the current interval.
\AP
Formally, $\BE$ formulas satisfy the following grammar:
\[
  \varphi ~~::=~~
  p ~~(\text{for }p\in\signature) ~~|~~ 
  \neg\varphi ~~|~~ 
  \varphi \vee \varphi ~~|~~ 
  \hsB \varphi ~~|~~ 
  \hsE \varphi. 
\]
Semantics is given in terms of an "interval structure" $\cS$ and one of its intervals $I$. 
\AP
Formally, an ""interval structure"" over a "signature" $\signature$ is a pair 
$\cS=\intervalstructure$, where 
$\sigma:\intervaldomain \rightarrow \wp(\signature)$ is a labelling
of intervals by subsets of $\signature$.
\AP
Whether a $\BE$ formula $\varphi$ ""holds at"" an interval $I$ of $\cS$, 
denoted $\cS,I\models\varphi$, is determined by the following rules:
\begin{itemize}
  \item $\cS,I \models p$ if $p\in\sigma(I)$;
  \item $\cS,I \models \neg \varphi$ if $\cS,I \not\models \varphi$; 
  \item $\cS,I \models \varphi_1 \vee \varphi_2$ if
        $\cS,I \models \varphi_1$ or $\cS,I \models \varphi_2$;
  \item \AP $\cS,I \models \mathop{""\hsB*""} \varphi$ if $\cS,J \models \varphi$ for some $J \prefix I$;
  \item \AP $\cS,I \models \mathop{""\hsE*""} \varphi$ if $\cS,J \models \varphi$ for some $J \suffix I$.
\end{itemize}
\AP
A formula is ""valid"" if it holds at every interval of every "interval structure"; 
\AP
similarly, it is ""satisfiable"" if it holds at some interval of some "interval structure".  
\AP
Two formulas $\varphi$ and $\varphi'$ are ""equivalent"" if for every interval 
structure $\cS$ and every interval $I$ in it, $\cS,I\models\varphi$ iff $\cS,I\models\varphi'$.
\AP
They are ""equi-satisfiable"" if either they are both "satisfiable" or none of them is.
\AP
The notions of "validity", "satisfiability", and "equivalence" can be relativized to a 
specific class of "interval structures" (possibly even to a single "interval structure").
\AP
As an example, we say that a formula $\varphi$ is ""valid over"" a class $\sC$ of 
"interval structures" if $\cS,I\models \varphi$ for all $\cS\in\sC$ and all 
$I\in\cS$. In the particular case where $\sC$ contains a single "interval structure" $\cS$, 
we will say that a formula $\varphi$ is \reintro{valid over} $\cS$ if $\cS,I\models\varphi$
for all $I\in\cS$.

It is  possible to add syntactic sugar to the logic $\BE$. 
\AP
As an example, we will often use shorthands like
$\varphi_1 \wedge \varphi_2 = \neg(\neg\varphi_1 \vee \neg\varphi_2)$,
$""\false*@\false"" = p\wedge \neg p$ (for any $p\in\signature$),
$""\true*@\true"" = \neg\false$,
and $""[X]@\hsBu""\varphi ""@\hsEu"" = \neg\langle X\rangle\neg\varphi$, for $X\in\{B,E\}$.
\AP
Some other useful shorthands are $""\pi*@\pi"" = \hsBu \false$, 
which constrains the interval where it is evaluated to be a singleton, and 
$""\hsGu*@\hsGu"" \varphi = \varphi \:\wedge\: \hsBu\varphi \:\wedge\: \hsEu\varphi \:\wedge\: \hsBu\hsEu\varphi$,
which constrains all sub-intervals (including the current interval, its "proper prefixes", and 
its "proper suffixes") to "satisfy" $\varphi$. 
The shorthands $\pi$ and $\hsGu$ can be viewed as derived nullary 
and unary modal operators, respectively, and can be added as syntactic 
sugar to  $\BE$.

\paragraph{Homogeneity assumption}
We recall from \cite{lod00,DBLP:journals/acta/MolinariMMPP16} 
that the "satisfiability" problems for the logic $\BE$ is 
undecidable, unless one restricts to "homogeneous" "interval structures".
\AP
An interval structure $\cS=\intervalstructure$ is ""homogeneous"" if its labelling
satisfies the condition $\sigma(I) = \bigcap_{x\in I}\sigma([x,x])$ for all $I\in\intervaldomain$.
Intuitively, this means that the labelling $\sigma$ is uniquely determined 
by its restriction to singleton intervals.
Let us take a closer look at the implications of "homogeneity".

\AP
First of all, we have that every formula $\hsB (p_1 \wedge p_2)$ 
is "equivalent" to $\hsB p_1 \wedge \hsB p_2$, and similarly for $\hsE$. 
Note, however, that "homogeneity" does not imply
similar properties for arbitrary formulas $\varphi_1,\varphi_2$ 
replacing the propositional letters $p_1,p_2$.
As an example, the formulas $\hsB(p \wedge \neg p)$ and
$(\hsB p) \wedge (\hsB\neg p)$ are not "equivalent".

\AP
"Homogeneity" can also be exploited to efficiently rewrite 
any $\BE$ formula into an "equivalent" one where every occurrence 
of a propositional letter is conjoined with $\pi$. 
Based on this observation, we introduce the following mild 
"normal form@homogeneous normal form":

\begin{definition}\label{def:homogeneus-normal-form}
A $\BE$ formula $\psi$ is in ""homogeneous normal form"" 
if every occurrence of a propositional letter $p$ in $\psi$ 
appears inside the subformula $\pi\wedge p$.
\end{definition}

Basically, the "homogeneous normal form" 
restricts propositional letters to be only 
evaluated at singleton intervals. 
As an example, the formula
$(\pi\wedge q) \vee \hsB(\pi \wedge \neg(\pi\wedge p))$ 
is in "homogeneous normal form" and "holds at" an 
interval $I$ iff $I$ consists of a single point 
labelled by $q$ or the left endpoint of $I$ is
not labelled by $p$.

\begin{proposition}\label{prop:homogeneus-normal-form}
One can transform in linear time any formula $\psi$ 
into one in "homogeneous normal form" that is "equivalent" 
to $\psi$ when interpreted over \emph{"homogeneous"}
"interval structures".
\end{proposition}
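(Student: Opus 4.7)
The plan is to reduce the claim to a single, uniform substitution at the atomic level: every atomic occurrence of every propositional letter $p$ in $\psi$ is replaced by a fixed constant-size formula $\chi_p$ that (i) is already in homogeneous normal form and (ii) is equivalent to $p$ on every homogeneous interval structure. Since each $\chi_p$ has size independent of $\psi$ and the substitution is non-recursive, the procedure runs in linear time and yields a formula of size $O(|\psi|)$.

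For the choice of $\chi_p$, I would take
\[
\chi_p \;:=\; \hsGu\bigl(\pi \,\to\, (\pi \wedge p)\bigr),
\]
treating $\hsGu$ and $\pi$ as the derived shorthands admitted in the preliminaries, which contribute no propositional letters of their own; by inspection, the only occurrence of $p$ in $\chi_p$ sits inside $\pi \wedge p$, so $\chi_p$ is in homogeneous normal form. To verify $p \equiv \chi_p$ on a homogeneous structure $\cS$ at an interval $I$, I would unfold $\hsGu\varphi$ as $\varphi \wedge \hsBu\varphi \wedge \hsEu\varphi \wedge \hsBu\hsEu\varphi$ and check that these four conjuncts jointly quantify over exactly the singleton sub-intervals $[z,z]$ with $z \in I$: the current interval itself (when $I$ is a singleton), its two endpoint singletons (via $\hsBu$ and $\hsEu$), and all interior singletons (via $\hsBu\hsEu$). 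The homogeneity identity $\sigma(I) = \bigcap_{x\in I}\sigma([x,x])$ then gives $\cS,I\models p$ iff $\pi \wedge p$ holds at every such singleton, iff $\cS,I\models\chi_p$.

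Once this atomic equivalence is in hand, correctness of the overall substitution follows by a routine structural induction on $\psi$, since semantic equivalence over homogeneous structures is a congruence with respect to the Boolean connectives and the modalities $\hsB$ and $\hsE$. I do not anticipate a substantive obstacle: the only subtle point is the convention that $\pi$ and $\hsGu$ be treated as primitive syntactic sugar rather than macro-expanded — the paper explicitly grants this — otherwise expanding $\pi = \hsBu\false$ with $\false = p \wedge \neg p$ would reintroduce bare occurrences of propositional letters inside $\chi_p$ and the substitution would have to be iterated (which would still terminate, but with a less transparent complexity bound).
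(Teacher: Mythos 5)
Your proposal is correct and follows essentially the same route as the paper: a uniform, constant-size substitution for each atomic occurrence of $p$ by a "homogeneous normal form" formula asserting that every singleton sub-interval satisfies $\pi\wedge p$ (the paper uses the explicit disjunction $(\pi\wedge p)\vee(\hsB(\pi\wedge p)\wedge\hsE(\pi\wedge p)\wedge\hsBu(\pi\vee\hsE(\pi\wedge p)))$ where you use $\hsGu(\pi\to(\pi\wedge p))$, but these are interchangeable). Your expansion of $\hsGu$ correctly covers the current interval, both endpoint singletons, and all interior singletons, so the homogeneity identity yields the required equivalence.
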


\begin{proof}
It suffices to replace 
every occurrence of a propositional letter $p$ in $\psi$ by the formula
$\mathtt{everywhere}(p) = 
 (\pi \wedge p) \:\vee\: 
 \big(\hsB(\pi \wedge p) \:\wedge\: \hsE(\pi \wedge p) \:\wedge\:
      \hsBu(\pi \vee \hsE(\pi\wedge p))\big)$.
The resulting formula is "equivalent" to $\psi$ since,
over "homogeneous" "interval structures", $p$ is "equivalent" to $\mathtt{everywhere}(p)$.
\end{proof}

\AP
We denote by $""\BEpi*@\BEpi""$ the fragment of "logic $\BE*$"
that contains only formulas in "homogeneous normal form".
\emph{From this point forward, we will exclusively work with 
$\BEpi$ formulas, with the understanding that this assumption 
may occasionally go unstated.
Accordingly, we will treat (sub)formulas of the form 
$\pi\wedge p$ as atomic.}


\section{A bounded-nesting normal form for \texorpdfstring{$\BEpi$}{BEpi}}\label{sec:translation}

\knowledge{notion}
	| modal depth
	| depth
	| depths
\knowledge{notion}
	| shallow normal form
	| normalized
	| normalization

In this section, we describe a transformation of arbitrary $\BEpi$ formulas 
into "equi-satisfiable" ones with boundedly many nested modalities.
The transformation is somehow reminiscent of the so-called 
Scott normal form
for the two-variable fragment of first-order logic \cite{Scott1962}, 
since it results in a formula, over an extended set of propositional letters, 
that is "satisfiable" if and only if the original formula was. 
The increased difficulty here is that the valuation of the new 
propositional letters emerging from the transformation 
must satisfy the "homogeneity" assumption.
This is to say that we cannot identify intervals 
"satisfying" a certain (sub)formula $\varphi$
by labelling them with a fresh propositional letter $q_\varphi$.
Rather, we will identify these intervals by appropriately 
correlating fresh labels assigned to their endpoints.
Our transformation will exploit in a crucial way the fact that,
under "homogeneity", valuations of formulas at two overlapping intervals 
have ``less degrees of freedom'' than valuations of the same formulas
at disjoint intervals.

\begin{definition}\label{def:modal-depth-and-normal-form}
The ""modal depth"" (or simply \reintro{depth}) of a $\BEpi$ formula 
is the maximum number of nested modal operators $\hsB$ and $\hsE$ in it, 
not counting those defining the operator $\pi$.
\AP
A $\BEpi$ formula is in ""shallow normal form"" if it is of the form
$\psi \:\wedge\: \hsGu\xi$, where both $\psi$ and $\xi$
have "depth" at most $2$.
\end{definition}

Concerning the above definition, 
we recall that $\hsGu\xi$ is a shorthand for 
$\xi \:\wedge\: \hsBu\xi \:\wedge\: \hsEu\xi \:\wedge\: \hsBu\hsEu\xi$,
so a formula in "shallow normal form" has "depth" at most $4$.
However, not all "depth"-$4$ formulas are in "shallow normal form".

\begin{theorem}\label{thm:main}
Given any $\BEpi$ formula $\psi$, one can compute in
in polynomial time an "equi-satisfiable" formula 
$\psi^\star$ that is in "shallow normal form".
\end{theorem}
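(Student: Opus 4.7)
The plan is to reduce the modal depth of $\psi$ by a Scott-style substitution, iterated until the formula has the shape $\psi' \wedge \hsGu \xi$ with both $\psi'$ and $\xi$ of depth at most $2$. First I would assume, by Proposition~\ref{prop:homogeneus-normal-form}, that $\psi$ is already a $\BEpi$ formula, so every occurrence of a propositional letter is guarded by $\pi$; in particular every fresh letter introduced below will also be placed only at singleton points. I would then proceed by induction on the modal depth of $\psi$. If $\psi$ already has depth at most $2$, we are done by conjoining the trivial guard $\hsGu \true$. Otherwise, pick a minimal subformula $\varphi$ whose depth exceeds $2$; it has the form $\hsB \alpha$ or $\hsE \alpha$ where $\alpha$ itself has depth at most $2$. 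The goal is to replace every occurrence of $\varphi$ in $\psi$ by a depth-$\le 2$ proxy $\hat\varphi$ over an enlarged signature, and to add a global axiom of the form $\hsGu(\hat\varphi \leftrightarrow \varphi)$ that is itself depth-$\le 2$, so that the conjunction with the rewritten $\psi$ sits in shallow normal form.

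The core technical device is an endpoint-labelling scheme. Under homogeneity we cannot simply mark the set of intervals at which $\varphi$ holds by a single fresh letter, since point labels propagate to every containing interval and would overconstrain the valuation. Instead I would introduce fresh atoms $p^L_\varphi, p^R_\varphi$, used only at singletons and meant to mark the left and right endpoints of ``witnesses'' of $\varphi$, together with auxiliary atoms whose role is to propagate, along the interior of a candidate witness, enough information to certify the truth of $\alpha$ on it. The proxy $\hat\varphi$ then takes a canonical shallow shape, roughly a conjunction of $\hsB(\pi \wedge p^L_\varphi)$, $\hsE(\pi \wedge p^R_\varphi)$, a length/properness constraint, and an $\hsB/\hsE$ test checking consistency of the propagated markers. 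Correspondingly, the global axiom enforcing the intended meaning of the fresh atoms decomposes into implications of the form $(\pi \wedge p^L_\varphi) \to \dots$ and $(\pi \wedge p^R_\varphi) \to \dots$, each of depth at most $2$ because $\alpha$ has depth at most $2$.

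The main obstacle is designing this endpoint encoding so that two properties hold simultaneously. Soundness, namely that $\hsGu(\hat\varphi \leftrightarrow \varphi)$ follows once the fresh atoms bear their intended meaning, is routine. The hard direction is completeness: every homogeneous model of $\psi$ must be extensible to a homogeneous model of the rewritten formula over the enlarged signature, for some choice of the fresh atoms. This requires showing that one can consistently mark endpoints and interior points of \emph{all} witnesses of $\varphi$ simultaneously, respecting the overlap constraints that homogeneity imposes across intervals that share points. I expect this to be where the analysis hinges on the ``fewer degrees of freedom'' observation from the preamble: for overlapping intervals, the truth of the depth-$2$ formula $\alpha$ is rigidified, so a bounded combinatorial pattern at the endpoints plus a single intermediate certifying point already determines it, which is exactly what the labelling scheme has to accommodate without producing contradictions.

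Finally, I would verify that one reduction step decreases the number of depth-$\ge 3$ subformulas while blowing up the formula size only polynomially and introducing only polynomially many new atoms. Iterating the construction a linear number of times then eliminates all deep subformulas and yields $\psi^\star$ in shallow normal form in polynomial time, as required.
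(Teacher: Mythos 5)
Your overall architecture --- iterated Scott-style elimination of deep modal subformulas via fresh endpoint labels, constrained by a depth-$\le 2$ global axiom placed under $\hsGu$, with a counting argument for polynomial-time termination --- is exactly the paper's strategy. The genuine gap is in the step you yourself flag as ``the hard direction'' and then dispatch with ``I expect this to hinge on the fewer-degrees-of-freedom observation'': that step is the technical core of the whole proof, and your sketch of it fails as stated. Marking \emph{all} witnesses of $\varphi$ is hopeless; the paper first restricts attention to \emph{prefix-minimal} (resp.\ suffix-minimal) witnesses, which suffice to decide $\hsB\varphi$ (resp.\ $\hsE\varphi$). Even then, a single pair of letters $\ell,r$ is not sound: if two prefix-minimal witnesses overlap, the interval formed by the left endpoint of one and the right endpoint of the other satisfies the proxy $\hsB(\pi\wedge\ell)\wedge\hsE(\pi\wedge r)$ without satisfying $\varphi$ (this is Example~\ref{ex:overlapping}), so $\hsB\hat\varphi$ would not entail $\hsB\varphi$. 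Your proposed fix --- auxiliary atoms that ``propagate along the interior of a candidate witness'' --- runs into the same collision, since under homogeneity overlapping witnesses share their interior points and any interior label is necessarily common to all of them.

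What is actually needed, and what is missing from your proposal, is a quantitative bound: every intersecting family of non-singleton prefix-minimal witnesses of $\varphi$ with pairwise distinct right endpoints has at most $2^{4|\varphi|}$ distinct right endpoints (Corollary~\ref{cor:core-bound-i}). This is the heart of the paper, proved via Dilworth's theorem together with a delicate minimal-counterexample induction on subformulas of $\varphi$ using ``special witnesses'' (Lemmas~\ref{lem:chain-antichain} and~\ref{lem:chain-bound}); it is what licenses using $m=4|\varphi|$ fresh bit-letters to give any two overlapping minimal witnesses distinct identifiers correlating their left and right endpoints, which restores soundness of the proxy while keeping the signature growth polynomial. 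Two smaller remarks: your axiom shape $(\pi\wedge p^L_\varphi)\to\cdots$ is not expressible, because from a singleton interval the modalities $\hsB$ and $\hsE$ reach nothing, so $\BE$ cannot look outward to the intervals containing a point --- the paper instead states the constraint at the level of the candidate witness interval itself, which $\hsGu$ ranges over; and your innermost-first choice of $\hsB\alpha$ with $\alpha$ of depth $\le 2$ is a harmless variation of the paper's outermost-first order (it even avoids the paper's need to re-normalize the occurrences of $\varphi$ inside the added axiom), but it does not sidestep the counting lemma, which must be established for $\alpha$ all the same.
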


To highlight one of the key ideas underlying the proof of the theorem, 
which we postpone to the next subsections, we give an example of 
normalization of a formula. 

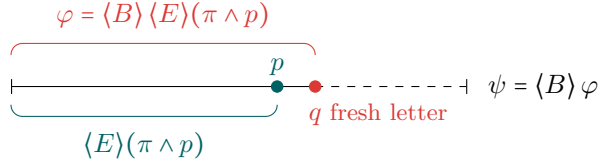
\begin{figure}[t]
\centering
\begin{tikzpicture}
\draw [|-] (0,0) -- (4,0);
\draw [-|, dashed] (4,0) -- (6,0);
\draw (7,0) node {$\psi=\hsB\varphi$};
\draw [nicered, above round brace, yshift=12] (0,0) -- (4,0)
      node [above, midway, yshift=10] 
      {$\varphi = \hsB*\hsE*(\pi*\wedge p)$};
\draw [nicecyan, below round brace, yshift=-7] (0,0) -- (3.5,0)
      node [below, midway, yshift=-10] 
      {$\hsE*(\pi*\wedge p)$};
\draw [nicered] (4,0) node [dot] {};
\draw [nicered] (4,0) node [below=2mm] 
      {$q \text{\small\rightward{ fresh letter}}$};
\draw [nicecyan] (3.5,0) node [dot] {};
\draw [nicecyan] (3.5,0) node [above=1.5mm] {$p$};
\end{tikzpicture}
\caption{Example of normalization of a formula $\psi=\hsB\varphi$.}
\label{fig:example}
\end{figure}

\begin{example}\label{ex:normalization}
Consider the formula $\psi=\hsB\varphi$ 
over the "signature" $\signature=\{p\}$, 
where $\varphi=\hsB\hsE (\pi \wedge p)$.
Figure \ref{fig:example} shows an example of an "interval structure" "satisfying" $\psi$; in particular, it highlights intervals 
witnessing $\varphi$ (in red) and $\hsE(\pi\wedge p)$ (in blue).
Note that $\psi$ has "depth" $3$ and is not in "shallow normal form". 
To rewrite $\psi$ into an "equi-satisfiable" formula in "shallow normal form",
we introduce a new propositional letter $q$ with the purpose of 
marking the right endpoints of the intervals that 
"satisfy" $\varphi$ and that are minimal w.r.t.~the "prefix" relation 
(we call these intervals prefix-minimal, for short).
Note that the right endpoints of these intervals
are immediately to the right of the $p$-labelled points.
We thus consider "interval structures" over the expanded 
"signature" $\signaturebis=\{p,q\}$ that make the following 
formula "valid@valid over":
\[
  \xi ~=~
  \underbrace{\big(\neg \pi \:\wedge\: \neg\hsB\neg\pi\big)}_%
             {\midward{\text{interval has exactly two points}}}
  ~~~~\then~~~~
  \underbrace{\big(\hsB(\pi\wedge p) \:\iff\: \hsE(\pi\wedge q)\big)}_ %
             {\midward{\text{$q$ is to the right whenever $p$ is to the left}}} 
\]
We can verify that, over "interval structures" 
that make $\xi$ "valid@valid over",
every prefix-minimal interval that 
"satisfies" $\varphi$ also "satisfies"
$\varphi' = \hsE(\neg\pi) \:\wedge\: \hsE (\pi \wedge q)$,
and, conversely, every interval that "satisfies" $\varphi'$ 
also "satisfies" $\varphi$.
This implies that, again over "interval structures" 
that make $\xi$ "valid@valid over", 
the "depth"-$3$ formula $\psi = \hsB\varphi$ 
is "equivalent" to the "depth"-$2$ 
formula $\psi' = \hsB\varphi'$.
Moreover, since the labelling of any "interval structure" 
over $\signature=\{p\}$ 
can always be expanded with the fresh letter $q$ 
so as to "satisfy" $\hsGu\xi$, we conclude 
that $\psi$ is "equi-satisfiable" as the formula 
$\psi^\star = \psi' \:\wedge\: \hsGu\xi$.
Since $\xi$ has "depth" $1$, $\psi^\star$ is also in "shallow normal form".
\end{example}

The normalization procedure for an arbitrary formula $\psi$
iterates a rewriting similar to the one presented in 
Example~\ref{ex:normalization}.
More precisely, we start by replacing every outermost subformula of $\psi$ of 
"depth" $d>2$ and of the form $\hsB\varphi$ (resp., $\hsE\varphi$) with an 
equi-satisfiable formula $\hsB\varphi'$ (resp., $\hsE\varphi'$) of "depth" $2$.
This rewriting step extends the "signature" with new propositional letters,
which are constrained while preserving "equi-satisfiability" using 
formulas similar to the $\hsGu\xi$ of Example \ref{ex:normalization}. 
Constraints will contain occurrences of the original subformula 
$\varphi$, and thus need to be normalized in their turn in order to eventually 
obtain formulas of "depth" at most $2$. 
More details and formal arguments about the normalization procedure of Theorem 
\ref{thm:main} will be provided in the next subsections.

We conclude this part by observing an immediate consequence of Theorem \ref{thm:main}.
We recall from \cite{DBLP:journals/acta/MolinariMMPP16} the existence of a rather simple,
but non-elementary procedure for deciding "satisfiability" of 
a $\BE$ formula $\psi$ under "homogeneity".
A close inspection to the description of this procedure shows that it has 
non-deterministic time complexity $\cO(\mathit{tow}(h,|\psi|))$, where 
$\mathit{tow}(h,n) = 2^{2^{\iddots^{^n}}}$ is the tower 
of $h$ exponents ending with $n$ and $h$ is the maximum 
number of nested modal operators in the input formula $\psi$.
As the shorthand $\pi$ can be directy handled in constant time, 
the parameter $h$ of the said complexity bound 
can be identified with
our notion of "modal depth" for $\BEpi$ formulas.
In particular, when we consider a formula $\psi$ 
in "shallow normal form", the parameter $h$ is at most $4$. 
Together with Proposition \ref{prop:homogeneus-normal-form}
and Theorem \ref{thm:main}, this gives a first rough complexity 
bound to the satisfiability problem for $\BE$ logic 
under the "homogeneity" assumption:

\begin{corollary}\label{cor:main}
The "satisfiability problem" for $\BE$ logic restricted
to "homogeneous" "interval structures"
is elementarily decidable, i.e.,~at least in $4\nexptime$.
\end{corollary}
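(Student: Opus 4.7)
The plan is to chain together the three ingredients already assembled in the paper: the linear-time reduction to homogeneous normal form (Proposition \ref{prop:homogeneus-normal-form}), the polynomial-time reduction to shallow normal form (Theorem \ref{thm:main}), and the known, albeit non-elementary, decision procedure from \cite{DBLP:journals/acta/MolinariMMPP16} for $\BE$ under homogeneity.

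First, given an arbitrary input $\BE$ formula $\psi$, I would apply Proposition \ref{prop:homogeneus-normal-form} to produce, in linear time, an equivalent (over homogeneous interval structures) $\BEpi$ formula $\psi_1$. Then I would feed $\psi_1$ to the transformation of Theorem \ref{thm:main}, obtaining in polynomial time an equi-satisfiable formula $\psi^\star$ in shallow normal form. By construction, $|\psi^\star|$ is polynomial in $|\psi|$, and, crucially, the modal depth of $\psi^\star$ is at most $4$ (the outer $\hsGu \xi$ contributes two nested universal modalities on top of the depth-$2$ body $\xi$, and the conjunct $\psi$ itself has depth $2$).

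Next, I would apply the procedure of \cite{DBLP:journals/acta/MolinariMMPP16} to decide satisfiability of $\psi^\star$ under homogeneity. As recalled in the excerpt, this procedure runs in non-deterministic time $\cO(\mathit{tow}(h, n))$, where $n$ is the size of the input formula and $h$ is its modal depth (the shorthand $\pi$ being handled in constant time, as noted, so that it does not inflate $h$). Instantiating with $h \le 4$ and $n = \mathrm{poly}(|\psi|)$, the running time is bounded by $\mathit{tow}(4, \mathrm{poly}(|\psi|))$, which places the whole procedure in $4\nexptime$. Combined with the equi-satisfiability provided by Theorem \ref{thm:main} and the equivalence provided by Proposition \ref{prop:homogeneus-normal-form}, this decides satisfiability of the original $\psi$ within the same bound.

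The only point that requires any care is ensuring that the complexity parameter $h$ in the analysis of the Molinari et al.~procedure really coincides with our notion of \emph{modal depth} after the normalization, i.e., that the derived operators $\pi$ and $\hsGu$ do not secretly contribute further levels of nesting that the tower function would exponentiate over. Since $\pi$ is a depth-$1$ modal shorthand that the procedure handles atomically, and $\hsGu$ unfolds into a conjunction already accounted for in Definition \ref{def:modal-depth-and-normal-form}, the bound $h \le 4$ transfers verbatim. Beyond this bookkeeping, the corollary follows by pure composition of the three cited results.
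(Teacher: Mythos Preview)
Your proposal is correct and follows essentially the same route as the paper: compose Proposition~\ref{prop:homogeneus-normal-form}, Theorem~\ref{thm:main}, and the $\cO(\mathit{tow}(h,|\psi|))$ bound from \cite{DBLP:journals/acta/MolinariMMPP16}, using that shallow normal form caps the depth parameter at $4$. The paper makes exactly the same observations about $\pi$ being handled atomically and about the depth of $\psi^\star$, so your extra bookkeeping paragraph just spells out what the paper states more tersely.
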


We shall provide later, in Section \ref{sec:complexity}, 
a more careful complexity analysis, showing that the 
"satisfiability problem" for $\BE$ logic under "homogeneity"
is actually $\expspace$-complete.

      
\knowledge{notion}
	| expansion
	| expanded
\knowledge{notion}
	| expander
	| expanders

\subsection{Expanders}\label{subsec:expanders}

A first ingredient of the "normalization" procedure 
of $\BEpi$ formulas is 
that of an "expander". Intuitively, this is a formula that constrains 
new propositional letters on the basis of the old ones in an arbitrary 
("homogeneous") "interval structure". 

\begin{definition}\label{def:expander}
Let $\signature\subseteq\signaturebis$ be two "signatures", and 
let $\cS=\intervalstructure$ and $\cS'=\intervalstructurebis$ be 
"interval structures" over $\signature$ and $\signaturebis$, respectively.
We say that $\cS'$ is an ""expansion"" of $\cS$ if $\timedomainbis=\timedomain$ and 
$\sigma'(I)\cap\signature=\sigma(I)$ for all intervals $I\in\intervaldomain$.

\AP
An ""expander"" from $\signature$ to $\signaturebis$
is a $\BEpi$ formula $\xi$ over $\signaturebis$ such that,
for every "interval structure" $\cS$ over $\signature$, 
there is an "expansion" $\cS'$ of $\cS$ over $\signaturebis$ 
that makes $\xi$ "valid@valid over".
\end{definition}

We report below a simple lemma about expanders.

\begin{restatable}{lemma}{ExpanderEquivalence}
\label{lem:expander-equivalence}
If $\xi$ is an "expander" from $\signature$ to $\signaturebis$,
$\psi$ and $\psi'$ are formulas over the "signatures" 
$\signature$ and $\signaturebis$, respectively, and
$\psi,\psi'$ are "equivalent" over all "interval structures" 
where $\xi$ is "valid@valid over",
then $\psi$ and $\psi' \wedge \hsGu\xi$ are "equi-satisfiable".
\end{restatable}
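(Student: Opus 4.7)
The plan is to establish the two directions of equi-satisfiability separately. The forward direction is essentially immediate from the definition of \emph{expander}, while the backward direction requires restricting the model to the time points of the witnessing interval, thereby turning the local assertion $\hsGu\xi$ into global validity of $\xi$.

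For the forward direction, I would start from $\cS,I\models\psi$ for some homogeneous interval structure $\cS$ over $\signature$. By definition of expander, there exists an expansion $\cS'$ of $\cS$ over $\signaturebis$ such that $\xi$ is valid over $\cS'$. Since we are working in $\BEpi$ each propositional letter occurs only inside a subformula $\pi\wedge p$, so the truth value of a $\signature$-formula at any interval is determined by the labels of \emph{singleton} intervals restricted to $\signature$; these agree between $\cS$ and $\cS'$ by the definition of expansion. A routine induction on $\psi$ then gives $\cS',I\models\psi$, and the equivalence hypothesis, applicable precisely because $\xi$ is valid over $\cS'$, yields $\cS',I\models\psi'$. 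As $\xi$ is valid over $\cS'$ we also have $\cS',I\models\hsGu\xi$, and we conclude $\cS',I\models\psi'\wedge\hsGu\xi$.

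For the backward direction, I would start from $\cS',I\models\psi'\wedge\hsGu\xi$ and pass in two stages to a model of $\psi$ over $\signature$. First, let $\cS''$ be the interval structure obtained by restricting $\cS'$ to the time points of $I$, i.e., to the domain $[\min(I),\max(I)]$ with the inherited labelling. Every interval of $\cS''$ is either $I$ itself, a proper prefix of $I$, a proper suffix of $I$, or a proper prefix of a proper suffix of $I$, so $\cS',I\models\hsGu\xi$ directly implies that $\xi$ is valid over the whole $\cS''$. Because prefixes and suffixes of intervals contained in $I$ remain inside $I$, an easy induction on subformulas shows that every $\BEpi$ formula keeps the same truth value at $I$ and at each $J\subseteq I$ when moving from $\cS'$ to $\cS''$; hence $\cS'',I\models\psi'$, and the equivalence hypothesis gives $\cS'',I\models\psi$. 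Second, I would project the labelling of $\cS''$ onto $\signature$, producing an interval structure $\cS$ over $\signature$ that, being a $\signature$-formula, still satisfies $\psi$ at $I$.

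The main obstacle — ultimately bookkeeping — is to check that neither restriction perturbs the semantics or breaks homogeneity. This hinges on the two distinguishing features of $\BEpi$: propositional letters are evaluated only at singletons, which makes the projection $\signaturebis\to\signature$ transparent to $\signature$-formulas, and the modalities $\hsB,\hsE$ reach only sub-intervals, so shrinking the time domain to $I$ is invisible to every subformula evaluated inside $I$. Preservation of homogeneity for both restrictions follows from the identity $\bigcap_{x\in J}(\sigma'([x,x])\cap\signature)=\bigl(\bigcap_{x\in J}\sigma'([x,x])\bigr)\cap\signature$, which commutes intersection with the projection.
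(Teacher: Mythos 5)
Your proof is correct and follows essentially the same route as the paper's: one direction expands a model of $\psi$ via the "expander" property and transfers truth through the "equivalence" hypothesis, and the other direction extracts a model of $\psi$ from a model of $\psi' \wedge \hsGu\xi$ using the same hypothesis. The one genuine difference is your intermediate step of restricting $\cS'$ to the time points of the witnessing interval $I$. The paper's own proof skips this: it passes directly from ``$\cS'$ "satisfies" $\psi' \wedge \hsGu\xi$ at some interval'' to ``$\xi$ is "valid over" $\cS'$'', which is literally justified only when the witnessing interval is the maximal one, since satisfaction of $\hsGu\xi$ at $I$ constrains only the sub-intervals of $I$, whereas the "equivalence" hypothesis is stated for "structures" over which $\xi$ is "valid@valid over" everywhere. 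Your restriction to $[\min(I),\max(I)]$, combined with the observation that $\hsB$ and $\hsE$ only look inside the current interval, is exactly what is needed to make that step airtight; the only bookkeeping you do not mention is that the truncated domain must be re-indexed to be a finite prefix of the natural numbers, which is harmless. Conversely, your final projection onto $\signature$ is not strictly necessary --- a "structure" over $\signaturebis$ already witnesses "satisfiability" of the $\signature$-formula $\psi$, which is how the paper itself concludes --- but it does no harm. So your argument buys a little extra rigour at the cost of the restriction construction, while the paper's version is shorter by treating that point as immediate.
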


\begin{proof}
Suppose that $\psi' \wedge \hsGu\xi$ is "satisfied by" 
an "interval structure" $\cS'$ over $\signaturebis$.
Because, $\xi$ is "valid over" $\cS'$, $\psi$ is
"equivalent" to $\psi'$ over $\cS'$, and hence 
$\cS'$ "satisfies" $\psi$.
Conversely, if $\psi$ is "satisfied by" an "interval structure" 
$\cS$ over $\signature$, then there is an "expansion"
$\cS'$ of $\cS$ that makes 
$\xi$ "valid@valid over". 
This implies that $\psi$ and $\psi'$ are "equivalent"
over $\cS'$. Hence $\cS'$ "satisfies" $\psi'$, and
$\psi' \wedge \hsGu\xi$ as well.
\end{proof}

\knowledge{notion}
	| prefix-minimal
	| prefix-minimality
	| minimal
\knowledge{notion}
	| suffix-minimal
	| suffix-minimality

\subsection{Minimal witnessing intervals}\label{subsec:minimal-intervals}

Recall that the "normalization" of a $\BEpi$ formula  
replaces subformulas $\hsB\varphi$ (resp., $\hsE\varphi$) 
of "depth" $d>2$ with equivalent formulas $\hsB\varphi'$ (resp., $\hsE\varphi'$) 
of "depth" $2$.
In this respect, a simple observation is that,
in order to determine which intervals "satisfy" 
$\hsB\varphi$ (resp., $\hsE\varphi$), one could 
look at intervals that satisfy $\varphi$ and that
are \emph{minimal} for the "prefix" (resp., "suffix")
relation.

\begin{definition}\label{def:minimal}
\AP
Given a $\BEpi$ formula $\varphi$, 
an "interval structure" $\cS$, and an interval $I$ in it, 
we say that $I$ is ""prefix-minimal"" (resp., ""suffix-minimal"") 
for $\varphi$ if $\cS,I\models\varphi$ and $\cS,J\not\models\varphi$ 
for every $J \prefix I$ (resp., $J \suffix I$). 
\end{definition}

We will see later that "prefix@prefix-minimal"/"suffix-minimal" 
intervals for $\varphi$ can be unambiguously identified, 
once their endpoints are annotated with fresh propositional letters, 
using a formula $\varphi'$ of size proportional 
to that of $\varphi$, but with "depth" just $1$.
A simplified account of this technique was already given in Example \ref{ex:normalization}.
Below, we discuss the approach under a more general 
perspective and highlight a potential issue with 
overlapping "minimal" witnesses.

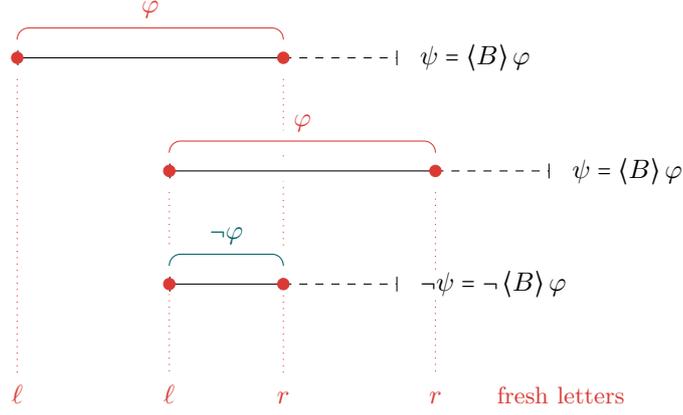
\begin{figure}[t]
\centering
\begin{tikzpicture}
\coordinate (A) at (0,0);
\coordinate (A') at (0,-4.5);
\coordinate (B) at (3.5,0);
\coordinate (B') at (3.5,-4.5);
\coordinate (C) at (5,0);
\coordinate (D) at (2,-1.5);
\coordinate (D') at (2,-4.5);
\coordinate (E) at (5.5,-1.5);
\coordinate (E') at (5.5,-4.5);
\coordinate (F) at (7,-1.5);
\coordinate (G) at (2,-3);
\coordinate (G') at (2,-4.5);
\coordinate (H) at (3.5,-3);
\coordinate (H') at (3.5,-4.5);
\coordinate (I) at (5,-3);
\draw [dotted, nicered, short] (A) -- (A');
\draw [dotted, nicered, short] (B) -- (B');
\draw [dotted, nicered, short] (D) -- (D');
\draw [dotted, nicered, short] (E) -- (E');
\draw [white, line width=8pt] (A) -- (C);
\draw [white, line width=8pt] (D) -- (F);
\draw [white, line width=8pt] (G) -- ([xshift=20] I);
\draw [|-] (A) -- (B);
\draw [-|, dashed] (B) -- (C);
\draw (C) node [right=3mm] {$\psi=\hsB\varphi$};
\draw [nicered, above round brace] ([yshift=7] A) -- ([yshift=7] B)
      node [above, midway, yshift=8] {$\varphi$};
\draw [nicered] (A) node [dot] {};
\draw [nicered] (B) node [dot] {};
\draw [|-] (D) -- (E);
\draw [-|, dashed] (E) -- (F);
\draw (F) node [right=3mm] {$\psi=\hsB\varphi$};
\begin{scope}
\draw [white, line width=8pt, above round brace] 
      ([yshift=7] D) -- ([yshift=7] E);
\end{scope}
\draw [nicered, above round brace] ([yshift=7] D) -- ([yshift=7] E)
      node [above, midway, yshift=8] {$\varphi$};
\draw [nicered] (D) node [dot] {};
\draw [nicered] (E) node [dot] {};
\draw [|-] (G) -- (H);
\draw [-|, dashed] (H) -- (I);
\draw (I) node [fill=white, rectangle, right=3mm] {$\neg\psi=\neg\hsB\varphi$};
\begin{scope}
\draw [white, line width=8pt, above round brace] 
      ([yshift=7] G) -- ([yshift=7] H);
\end{scope}
\draw [nicecyan, above round brace] ([yshift=7] G) -- ([yshift=7] H)
      node [above, midway, yshift=8] {$\neg\varphi$};
\draw [nicered] (G) node [dot] {};
\draw [nicered] (H) node [dot] {};
\draw [nicered] (A') node {$\ell$}; 
\draw [nicered] (D') node {$\ell$}; 
\draw [nicered] (B') node {$r$}; 
\draw [nicered] (E') node {$r$}; 
\draw [nicered] ([xshift=20] E') node 
      {\text{\small\rightward{ fresh letters}}};
\end{tikzpicture}
\caption{Overlapping "prefix-minimal" intervals for $\varphi$, 
and their intersection.}
\label{fig:overlapping-witnesses}
\end{figure}

\begin{example}\label{ex:overlapping}
Suppose that $\varphi$ is a formula of "depth" $2$.
We aim at replacing it with a formula $\varphi'$ of "depth" $1$, 
so that $\hsB\varphi'$ turns out to be "equivalent" to $\hsB\varphi$ 
in an appropriate "expansion" of the "interval structure".
As discussed earlier, a natural approach is to focus only on intervals 
that are "prefix-minimal" for $\varphi$, and mark their endpoints with 
suitable fresh propositional letters.
For example, two "prefix-minimal" intervals for $\varphi$
are represented in Figure \ref{fig:overlapping-witnesses}
by the red brackets.
We mark their left and right endpoints with fresh 
propositional letters $\ell$ and $r$, respectively,
and we assume that the "interval structure" is 
"expanded" so as to satisfy the intended use of 
$\ell$ and $r$. 
We then define 
$\varphi' = (\hsB(\pi \wedge \ell) \wedge \hsE(\pi \wedge r)) \vee 
            ((\pi \wedge \ell) \wedge (\pi \wedge r))$
and observe that every interval "satisfying" $\hsB\varphi$ 
must also "satisfy" $\hsB\varphi'$.
So one might be tempted to replace $\varphi$ with $\varphi'$.
Unfortunately, while $\hsB\varphi$ entails $\hsB\varphi'$, 
the converse is not true,
as the intersection of any two "prefix-minimal" intervals 
for $\varphi$ does not always "satisfy" $\varphi$ 
(see the blue bracket in Figure \ref{fig:overlapping-witnesses}).
In general, in order to mark the endpoints of 
"minimal" intervals without ambiguities, 
one could use different letters to mark the endpoints
of any two overlapping intervals. 
More precisely, one should 
introduce as many copies of letters 
$\ell,r$ as the maximum number of overlapping "prefix-minimal" 
intervals for $\varphi$ 
that have different right endpoints.
\end{example}


               
\knowledge{notion}
    | intersecting
	| intersecting family
	| intersecting families
\knowledge{notion}
	| chain
	| chains
\knowledge{notion}
	| anti-chain
	| anti-chains
\knowledge{notion}
	| \BArgs
	| \BArgs*\varphi
\knowledge{notion}
	| \EArgs
	| \EArgs*\varphi
\knowledge{notion}
	| $\varphi$-profile
	| $\varphi$-profiles
	| profile
	| profiles
\knowledge{notion}
	| \leftDeltaIntervals
	| \leftDeltaIntervals*
	| \leftDeltaIntervals{i}
	| \leftDeltaIntervals*{i}
\knowledge{notion}
	| \rightDeltaIntervals
	| \rightDeltaIntervals*
	| \rightDeltaIntervals{i}
	| \rightDeltaIntervals*{i}
\knowledge{notion}
	| special $\hsB*$-witness of $\alpha$ at $i$
	| special $\hsB*$-witness
	| special $\hsB*$-witnesses
	| special $\hsX*$-witness
	| special $\hsX*$-witnesses
	| special witness
	| special witnesses
	| Special witnesses
\knowledge{notion}
	| special $\hsE*$-witness of $\alpha$ at $i$
	| special $\hsE*$-witness
	| special $\hsE*$-witnesses
	| Special $\hsE*$-witnesses
\knowledge{notion}
    | \dec
    | \dec*{\varphi}
\knowledge{notion}
    | \enc
    | \enc*{\varphi}

\subsection{Encoding of minimal witnessing intervals}
\label{subsec:intersecting-families}

Example \ref{ex:overlapping} brings up a third ingredient 
that is crucial for the "normalization" procedure,
as it suggests that, in order to mark without ambiguities 
the endpoints of "prefix-minimal" 
(resp., "suffix-minimal") intervals for a formula $\varphi$,
one must first bound the number of distinct right (resp., left) 
endpoints of overlapping intervals. 
A bound will be shown precisely in 
Corollary~\ref{cor:core-bound-i} below.

\begin{definition}\label{def:intersecting-family}
A set $\cI$ of intervals is an ""intersecting family"" 
if there is a point $x$ that is contained in every interval of $\cI$.
\end{definition}

An example of an "intersecting family" of intervals
is shown to the left of Figure \ref{fig:chain-antichain}.

Towards proving the desired bound,
we shall first establish two auxiliary lemmas.
The first lemma relates the maximum 
cardinality of a partially ordered set
(e.g., an "intersecting family" of intervals,
partially ordered by containment) to the maximum 
cardinality of its "chains" and "anti-chains".
\AP
Formally, a ""chain"" of a partially ordered set 
is a subset of pairwise comparable elements.
\AP
An ""anti-chain"" is a subset of pairwise 
incomparable elements. 
The first lemma is in fact a rephrasing of 
Dilworth's theorem~\cite{dilworth50} 
(we give a proof here for self-containment):

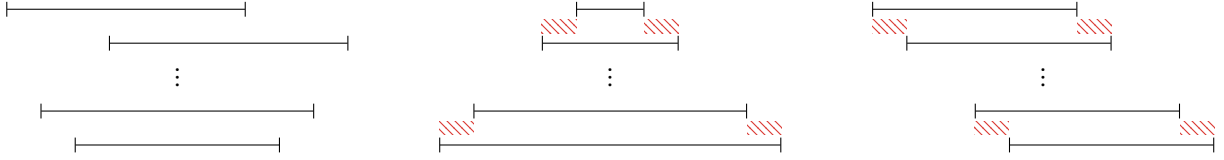
\begin{figure*}[t]
\centering
\begin{tikzpicture}[scale=0.9]
\begin{scope}[xshift=0]
\draw [|-|] (-2.5,0) -- (1,0);
\draw [|-|] (-1,-0.5) -- (2.5,-0.5);
\draw (0,-1) node {$\vdots$};
\draw [|-|] (-2,-1.5) -- (2,-1.5);
\draw [|-|] (-1.5,-2) -- (1.5,-2);
\end{scope}
\begin{scope}[xshift=180]
\draw [|-|] (-0.5,0) -- (0.5,0);
\fill [pattern color=nicered, pattern=north west lines] 
      (-0.5,-0.15) rectangle (-1,-0.35);
\fill [pattern color=nicered, pattern=north west lines] 
      (0.5,-0.15) rectangle (1,-0.35);
\draw [|-|] (-1,-0.5) -- (1,-0.5);
\draw (0,-1) node {$\vdots$};
\draw [|-|] (-2,-1.5) -- (2,-1.5);
\fill [pattern color=nicered, pattern=north west lines] 
      (-2,-1.65) rectangle (-2.5,-1.85);
\fill [pattern color=nicered, pattern=north west lines] 
      (2,-1.65) rectangle (2.5,-1.85);
\draw [|-|] (-2.5,-2) -- (2.5,-2);
\end{scope}
\begin{scope}[xshift=360]
\draw [|-|] (-2.5,0) -- (0.5,0);
\fill [pattern color=nicered, pattern=north west lines] 
      (-2.5,-0.15) rectangle (-2,-0.35);
\fill [pattern color=nicered, pattern=north west lines] 
      (0.5,-0.15) rectangle (1,-0.35);
\draw [|-|] (-2,-0.5) -- (1,-0.5);
\draw (0,-1) node {$\vdots$};
\draw [|-|] (-1,-1.5) -- (2,-1.5);
\draw [|-|] (-0.5,-2) -- (2.5,-2);
\fill [pattern color=nicered, pattern=north west lines] 
      (-1,-1.65) rectangle (-0.5,-1.85);
\fill [pattern color=nicered, pattern=north west lines] 
      (2,-1.65) rectangle (2.5,-1.85);
\end{scope}
\end{tikzpicture}
\\[5ex]
\caption{From left to right: an "intersecting family" of intervals, a "chain", and an "anti-chain".}
\label{fig:chain-antichain}
\end{figure*}

\begin{lemma}\label{lem:chain-antichain}
Let $X$ be a partially ordered set and suppose that
all its "chains" and "anti-chains" have cardinality
at most $n$. Then the cardinality of $X$ is at most $n^2$.
\end{lemma}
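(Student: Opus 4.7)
The plan is to prove the bound via a standard Mirsky-style height function argument (the result is a well-known corollary of Dilworth's theorem, but it admits a direct, self-contained proof that fits the style of the paper).

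For each element $x \in X$, I would define $h(x)$ to be the maximum cardinality of a "chain" in $X$ whose greatest element is $x$. Since $\{x\}$ is itself a chain of cardinality $1$ ending at $x$, the value $h(x)$ is well defined and at least $1$. By the hypothesis that all chains have size at most $n$, we automatically get $h(x) \in \{1, 2, \ldots, n\}$ for every $x \in X$.

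The key step is to show that, for each fixed $i \in \{1, \ldots, n\}$, the level set
\[
  X_i \;=\; \{\, x \in X \;:\; h(x) = i \,\}
\]
is an "anti-chain". Indeed, suppose for contradiction that $x, y \in X_i$ are comparable, say $x < y$. By the definition of $h(x)$, there is a chain $C \subseteq X$ of cardinality $i$ whose maximum is $x$. Then $C \cup \{y\}$ is a chain of cardinality $i+1$ whose maximum is $y$, contradicting $h(y) = i$. Hence $X_i$ is an anti-chain, and by hypothesis $|X_i| \le n$.

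Finally, since the level sets $X_1, \ldots, X_n$ partition $X$, we conclude
\[
  |X| \;=\; \sum_{i=1}^{n} |X_i| \;\le\; n \cdot n \;=\; n^2,
\]
as required. There is no serious obstacle here: the only subtlety is being careful that "chain length" is interpreted as cardinality (so that extending a chain by a strictly greater element really does increase the length by one), which matches the convention used in the hypothesis of the lemma.
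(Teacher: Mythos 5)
Your proof is correct and follows essentially the same route as the paper: both partition $X$ into at most $n$ anti-chains indexed by height (the paper's $Y_i$, obtained by iteratively peeling off minimal elements, coincides with your level set $X_i$). Your height-function formulation is slightly cleaner in that the bound of $n$ on the number of levels falls out of the definition of $h$ rather than requiring the paper's separate chain-construction argument by contradiction.
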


\begin{proof}
To begin with, notice that $X$ is well-founded, due to the 
hypothesis that "chains" have cardinality at most $n$.
Define the partition $Y_1,Y_2, \ldots$ of $X$, where each
$Y_i$ contains all and only the \emph{minimal} elements of
$X \setminus \bigcup_{j<i} Y_j$ --- in particular, each $Y_i$ 
is defined inductively on the basis of the previous sets 
$Y_1,\dots,Y_{i-1}$.
By construction, every subset $Y_i$ is an "anti-chain",
and hence, by the hypotheses of the claim, it has cardinality
at most $n$. 

Let us now bound by $n$ the number of subsets of the partition.
Towards a contradiction, assume that $Y_1,Y_2,\dots,Y_{n+1}$ 
belong to the partition of $X$.
By construction, for every $1<i\le n+1$ and every $y\in Y_i$, 
there is $y'\in Y_{i-1}$ such that $y'<y$ 
(otherwise $y$ should have been added to $Y_{i-1}$).
Using this property and a simple induction, we can construct 
a "chain" of length $n+1$: we start by taking an arbitrary $y_{n+1}\in Y_{n+1}$
and then we repeatedly use the property to prepend to a "chain"
$y_i < y_{i+1} < \dots < y_{n+1}$, with $i>1$, 
$y_i\in Y_i$, $y_{i+1}\in Y_{i+1}$, \dots, $y_{n+1}\in Y_{n+1}$,
a new element $y_{i-1} < y_i$, with $y_{i-1}\in Y_{i-1}$.
Clearly, such a "chain" of length $n+1$ leads to a 
contradiction, and hence the partition $Y_1, Y_2, \ldots$ 
of $X$ contains at most $n$ elements.
We conclude that $|X| = \sum_{i}|Y_i|\le  n^2$.
\end{proof}

Ultimately, we aim at applying Lemma~\ref{lem:chain-antichain} 
to bound the cardinality of every "intersecting family" of 
"prefix-minimal" (resp., "suffix-minimal") intervals with 
pairwise distinct right (resp., left) endpoints, 
using the containment relation as partial order.
To this end, it is crucial to bound the cardinalities of 
the "chains" and "anti-chains" of such an "intersecting family".
It will be also convenient to avoid singleton intervals 
when reasoning about "intersecting families" 
(note that there is at most one singleton interval 
in every "intersecting family").

\begin{lemma}\label{lem:chain-bound}
Let $\cS$ be an "interval structure", 
$\varphi$ a $\BEpi$ formula, 
$\cI$ an "intersecting family" of non-singleton "prefix-minimal" 
(resp., "suffix-minimal") intervals for $\varphi$, 
with pairwise distinct right (resp., left) endpoints, 
and $\cI'$ a "chain" or an "anti-chain" of $\cI$, 
where the partial order is given by containment. 
We have that
\begin{equation}\label{eq:bound}
  |\cI'| ~\le~ 2^{2|\varphi|}.
\end{equation}
\end{lemma}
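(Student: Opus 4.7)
The plan is to establish the bound via a pigeonhole argument on a carefully chosen notion of \emph{extended type}, focusing on the prefix-minimal case; the suffix-minimal case is entirely symmetric, swapping the roles of left and right endpoints throughout.

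First I would record the geometric setup. Since any two distinct prefix-minimal witnesses for $\varphi$ automatically have distinct left endpoints (otherwise one would be a proper prefix of the other, contradicting minimality), and by hypothesis they also have distinct right endpoints, the chain case of $\cI'$ takes the form $I_1 \subsetneq I_2 \subsetneq \cdots \subsetneq I_n$ with $I_k = [a_k, b_k]$, $a_1 > \cdots > a_n$ and $b_1 < \cdots < b_n$, while the antichain case takes the form $I_1,\dots,I_n$ with $a_1 < \cdots < a_n$ and $b_1 < \cdots < b_n$. In both situations, for any $j < k$, the common point $x$ of the intersecting family witnesses that the mixed interval $[a_k, b_j]$ is well-defined and is a proper prefix of $I_k$; by prefix-minimality of $I_k$ for $\varphi$, one therefore has $\cS, [a_k, b_j] \not\models \varphi$.

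I would then assign to each $I_k$ an extended type $\tau_k = (\alpha_k, \beta_k)$, where $\alpha_k = \{\psi \in \mathrm{sub}(\varphi) : \cS, I_k \models \psi\}$ records which subformulas of $\varphi$ hold at $I_k$, and $\beta_k$ records the analogous information at a canonical companion sub-interval of $I_k$ (for example, the proper prefix $[a_k, b_{k-1}]$ when $k \ge 2$). This yields at most $2^{2|\mathrm{sub}(\varphi)|} \le 2^{2|\varphi|}$ distinct extended types, so if $|\cI'| > 2^{2|\varphi|}$ the pigeonhole principle produces indices $j < k$ with $\tau_j = \tau_k$. The cut-and-paste step then consists of proving, by induction on the structure of $\varphi$, that the equality $\tau_j = \tau_k$ forces $\cS, [a_k, b_j] \models \psi \iff \cS, I_j \models \psi$ for every $\psi \in \mathrm{sub}(\varphi)$. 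Taking $\psi = \varphi$ and using $I_j \models \varphi$ then gives $[a_k, b_j] \models \varphi$, contradicting the previous paragraph.

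The main obstacle is precisely this inductive transfer, because the prefixes and suffixes of $[a_k, b_j]$ mix endpoints from $I_j$ (on the right, at $b_j$) and from $I_k$ (on the left, at $a_k$), and a naive componentwise induction breaks down at the modalities $\hsB$ and $\hsE$. The intuition guiding the choice of $(\alpha_k,\beta_k)$ is that the $\alpha$-agreement should handle $\hsE$-subformulas, whose witnessing suffixes end at $b_j$ and are naturally shared between $I_j$ and $[a_k, b_j]$, while the $\beta$-agreement, combined crucially with homogeneity, should handle $\hsB$-subformulas, whose witnessing prefixes start at $a_k$ and are controlled by the prefix landscape already encoded in $\beta_k$. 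If a single companion interval turns out to be insufficient, the backup plan is to iterate the construction with further canonical auxiliary intervals, while keeping the total information encoded in $\tau_k$ within the $2|\varphi|$-bit budget demanded by the stated bound.
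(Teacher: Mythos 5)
There is a genuine gap, and it sits exactly at the step you yourself flag as ``the main obstacle'': the proposal never actually proves that $\tau_j=\tau_k$ forces $[a_k,b_j]\models\varphi$, and with the type you propose this implication is false in general. The problematic direction is the one you assign to the $\beta$-component. Suppose $\varphi$ contains a subformula $\hsB\alpha$ and $I_j\models\hsB\alpha$; by type equality $I_k\models\hsB\alpha$, so some proper prefix of $I_k$ satisfies $\alpha$ --- but the \emph{shortest} such prefix may have its right endpoint strictly between $b_j$ and $b_k$, in which case $[a_k,b_j]\not\models\hsB\alpha$ even though $I_j\models\hsB\alpha$. Your companion interval $[a_k,b_{k-1}]$ only records information about the region between consecutive intervals, whereas the pigeonhole indices $j<k$ need not be consecutive, so $\beta_k$ says nothing about where the $\alpha$-witnesses of $I_k$ fall relative to $b_j$. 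The ``backup plan'' of adding further canonical companions cannot be executed either: you have already spent essentially the whole $2|\varphi|$-bit budget on two full subformula sets, and no fixed polynomial-size record of which subformulas hold at finitely many designated intervals pins down the \emph{positions} of the minimal witnesses, which is what the argument needs.

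The paper's proof confronts precisely this difficulty and resolves it with two ingredients absent from your proposal. First, the pigeonhole is only on a coarser ``profile'' (the topmost $\hsB$- and $\hsE$-subformulas holding at prefixes/suffixes), which yields not just two but \emph{many} intervals $I_1,\dots,I_n$ with equal profile; the contradiction is then sought among the $n-1$ \emph{consecutive} pairs. Second, for a given pair the transfer to $K_i=[\min(I_{i+1}),\max(I_i)]$ can fail only if some subformula $\alpha$ has a ``special witness'' --- a prefix-minimal (or suffix-minimal) interval for $\alpha$ whose endpoint lands in the gap between $\max(I_i)$ and $\max(I_{i+1})$ (resp.\ between the left endpoints). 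These special witnesses for a fixed $\alpha$ themselves form an intersecting chain or anti-chain of minimal witnesses for $\alpha$ with pairwise distinct endpoints, so by a minimal-counterexample induction on the formula size their number is at most $2^{2|\alpha|}$; summing over subformulas shows that some consecutive pair has no special witness at all, and only for that pair does the profile-transfer argument (your ``cut-and-paste'') go through. This recursion on subformulas, which is what makes the exponent $2|\varphi|$ work out, is the key idea missing from your outline.
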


\begin{proof}
We present the proof for an "intersecting family" 
of non-singleton \emph{"prefix-minimal"} intervals for $\varphi$ 
(the case of "suffix-minimal" intervals uses symmetric arguments).
Towards a contradiction, assume that there exist 
a $\BEpi$ formula $\varphi$, 
an "intersecting family" 
$\cI$ of non-singleton "prefix-minimal" intervals for
$\varphi$ with pairwise distinct right endpoints, 
and a subset $\cI'$ of $\cI$ that is a "chain" or an 
"anti-chain" and that violates the bound \eqref{eq:bound}, 
i.e., $\cI'$ contains more than $2^{2|\varphi|}$ intervals.
We also assume, without loss of generality, that 
$\varphi$ is a smallest formula witnessing this 
violation of the bound
(later we will exploit this 
assumption when considering families of "prefix-minimal" 
intervals for subformulas of $\varphi$).

\AP
Let $""\BArgs*\varphi""$ (resp., $""\EArgs*\varphi""$) be
the set of formulas $\alpha$ such that $\hsB\alpha$ (resp., $\hsE\alpha$) is
a subformula of $\varphi$ with no other modal operator above it.
For example, if $\varphi = \hsB\alpha_1 \:\wedge\: \hsB\hsB\alpha_2 \:\wedge\: \hsE\hsB\alpha_3$,
then $\BArgs\varphi = \{\alpha_1,\hsB\alpha_2\}$ and $\EArgs = \{\hsB\alpha_3\}$.
Note that $|\varphi| \geq |\BArgs\varphi| + |\EArgs\varphi|$.

\AP
Define the ""$\varphi$-profile"" of a non-singleton interval 
$I$ as the pair $(B,E)$, where $B$ (resp., $E$) is the set 
of formulas $\alpha\in\BArgs{\varphi}$
(resp., $\alpha\in\EArgs{\varphi}$) that "hold at" 
"prefixes" (resp., "suffixes") of $I$. 
Note that any two non-singleton intervals 
with the same "$\varphi$-profile" 
either both "satisfy@holds at" $\varphi$ 
or both "satisfy@holds at" $\neg\varphi$;
in particular, this holds thanks to the 
fact that $\varphi$ is in "homogeneous normal form".

We also observe that 
there are at most $2^{|\BArgs{\varphi}| + |\EArgs{\varphi}|}$ 
distinct "$\varphi$-profiles". 
Therefore, by our assumption on $\cI'$, there are
\[
  n ~>~ 2^{2|\varphi| - |\BArgs{\varphi}| - |\EArgs{\varphi}|}
\]
intervals $I_1,\dots,I_n \in \cI'$ with the same $\varphi$-profile.
Without loss of generality, assume that the intervals $I_1,\dots,I_n$ 
are listed based on the natural ordering of their right endpoints, 
that is, $\max(I_1) < \dots < \max(I_n)$.
Depending on $\cI'$ being a "chain" or an "anti-chain", 
the left endpoints of these intervals are also ordered, 
in descending, resp., ascending order
(see Figure \ref{fig:chain-antichain}).

For the rest of the proof, unless otherwise stated, 
$i$ will denote a natural number from $1$ to $n-1$, and will 
be used in particular to index pairs of consecutive 
intervals, say $I_i$ and $I_{i+1}$.
\AP
For every $i$, let
$w_i < x_i \leq y_i < z_i$ be the four endpoints of $I_i$ and $I_{i+1}$.
Further, let $""\leftDeltaIntervals*{i}"" = [w_i+1,x_i]$ and
$""\rightDeltaIntervals*{i}"" = [y_i,z_i-1]$ (these intervals are represented 
by the red dashed rectangles in Figure \ref{fig:chain-antichain}).
Thanks to the fact that $\cI'$ is a "chain" or an "anti-chain", 
the $\leftDeltaIntervals{i}$'s and the $\rightDeltaIntervals{i}$'s 
are pairwise disjoint across all $i$ 
(this property will be used later and is the main reason 
for restricting our attention to "chains" and "anti-chains").

\AP
Given $\alpha\in\BArgs{\varphi}$ and $1\le i<n$, 
a ""special $\hsB*$-witness of $\alpha$ at $i$"" (if it exists)
is the "prefix-minimal" interval for $\alpha$ that has the 
same left endpoint as $I_{i+1}$ and whose right endpoint belongs to $\rightDeltaIntervals{i}$.
Figure \ref{fig:special-witnesses} gives two examples
of "special $\hsB*$-witnesses", represented by green brackets:
one example is for the "chain" arrangement and the other
is for the "anti-chain" arrangement.
\AP
Symmetrically, given $\alpha\in\EArgs{\varphi}$ and $1\le i<n$,
a ""special $\hsE*$-witness of $\alpha$ at $i$"" (if it exists) is the "suffix-minimal"
interval for $\alpha$ that has the same right endpoint as $I_i$ and whose 
 left endpoint belongs to $\leftDeltaIntervals{i}$.
"Special $\hsE*$-witnesses" are represented in Figure
\ref{fig:special-witnesses} by blue brackets.

Now, we tag an index $1\le i<n$ with a pair $(B,\alpha)$
(resp., $(E,\alpha)$) whenever
$\alpha\in\BArgs{\varphi}$ (resp., $\alpha\in\EArgs{\varphi}$) and
there is a "special $\hsB*$-witness" (resp., "$\hsE*$-witness@special $\hsE*$-witness") 
of $\alpha$ at $i$.
If there is no "special witness" for any $\alpha$, 
then we tag $i$ with the symbol $\bot$.
Let $K_i = [\min(I_{i+1}),\max(I_i)]$ and observe
that $K_i$ is a \emph{proper "prefix"} of $I_{i+1}$.
We will prove that, for some index $i$, the interval 
$K_i$ "satisfies@holds at" $\varphi$, thus contradicting 
"prefix-minimality" of $I_{i+1}$.
Towards this, it will be sufficient to find an 
index $i$ tagged with $\bot$. 
Indeed, if this happens, then we claim that

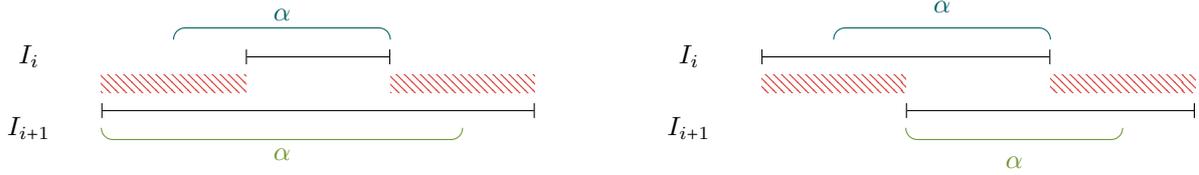
\begin{figure*}[t!]
\centering
\begin{tikzpicture}[scale=0.95]
\begin{scope}
\coordinate (A) at (1,0);
\coordinate (B) at (2,0);
\coordinate (C) at (4,0);
\coordinate (D) at (0,-0.75);
\coordinate (E) at (5,-0.75);
\coordinate (F) at (6,-0.75);
\draw [|-|] (B) -- (C);
\draw [|-|] (D) -- (F);
\draw [nicecyan, above round brace] ([yshift=7] A) -- ([yshift=7] C)
      node [above=-0.1, midway, yshift=8] {$\alpha$};
\draw [nicegreen, below round brace] ([yshift=-7] D) -- ([yshift=-7] E)
      node [below=-0.1, midway, yshift=-8] {$\alpha$};
\fill [pattern color=nicered, pattern=north west lines] 
      ([yshift=-7] B) rectangle ([yshift=7] D);
\fill [pattern color=nicered, pattern=north west lines] 
      ([yshift=-7] C) rectangle ([yshift=7] F);
\draw (-1,0) node {$I_i$};
\draw (-1,-1) node {$I_{i+1}$};
\end{scope}
\begin{scope}[xshift=260]
\coordinate (A) at (0,0);
\coordinate (B) at (1,0);
\coordinate (C) at (4,0);
\coordinate (D) at (2,-0.75);
\coordinate (E) at (5,-0.75);
\coordinate (F) at (6,-0.75);
\draw [|-|] (A) -- (C);
\draw [|-|] (D) -- (F);
\draw [nicecyan, above round brace] ([yshift=7] B) -- ([yshift=7] C)
      node [above, midway, yshift=8] {$\alpha$};
\draw [nicegreen, below round brace] ([yshift=-7] D) -- ([yshift=-7] E)
      node [below, midway, yshift=-8] {$\alpha$};
\fill [pattern color=nicered, pattern=north west lines] 
      ([yshift=-7] A) rectangle ([yshift=7] D);
\fill [pattern color=nicered, pattern=north west lines] 
      ([yshift=-7] C) rectangle ([yshift=7] F);
\draw (-1,0) node {$I_i$};
\draw (-1,-1) node {$I_{i+1}$};
\end{scope}
\end{tikzpicture}
\caption{"Special witnesses" in a "chain" (left) 
and in an "anti-chain" (right).}
\label{fig:special-witnesses}
\end{figure*}

\begin{claim}
Every formula 
$\alpha\in\BArgs{\varphi}$ (resp., $\alpha\in\EArgs{\varphi}$) 
that "holds at" a "prefix" (resp., "suffix") of $I_i$ also 
"holds at" a "prefix" (resp., "suffix") of $K_{i}$, and 
vice versa. 
\end{claim}

The above claim would then imply that the 
"$\varphi$-profile" of $K_i$ coincides with 
that of $I_i$, and hence $K_i\models\varphi$.

\begin{proof}[Proof of the claim]
Assume that index $i$ is tagged with $\bot$.
Consider some $\alpha\in\BArgs{\varphi}$.
If $\alpha$ "holds at" a "prefix" of $I_i$, 
then $\alpha$ "holds at" some "prefix"
of $I_{i+1}$ as well, because $I_i$ and $I_{i+1}$ 
have the same "$\varphi$-profile".
Let $J$ be the smallest "prefix" of $I_{i+1}$ 
that "satisfies@holds at" $\alpha$.
Due to $i$ being tagged with $\bot$, we have that 
$\max(J) < \max(I_i) = \max(K_i)$, meaning that 
$J$ is also a "prefix" of $K_i$.
Conversely, if $\alpha$ "holds at" a "prefix" of $K_i$, 
then it trivially "holds at" a "prefix" of $I_{i+1}$ as well, 
and thus it "holds at" a "prefix" of $I_{i}$, too,
because $I_i$ and $I_{i+1}$ have the same "$\varphi$-profile".
Next, consider some $\alpha\in\EArgs{\varphi}$.
If $\alpha$ "holds at" some "suffix" of $I_i$, then let $J$ be the smallest "suffix" of $I_{i}$ that "satisfies@holds at" $\alpha$.
Due to $i$ being tagged with $\bot$, 
we have that $\min(J) > \min(I_{i+1}) = \min(K_i)$, 
meaning that $J$ is also a "suffix" of $K_i$.
Conversely, assume that $\alpha$ "holds at" some 
"suffix" of $K_i$ and let $J$ be the smallest "suffix" 
of $K_{i}$ that "satisfies@holds at" $\alpha$.
Once again, since $i$ is tagged with $\bot$, 
we have that $\min(J) > \min(I_{i})$, meaning 
that $J$ is a "suffix" of $I_i$, too.
\end{proof}
%

It remains to prove that at least one index $i$ is tagged with $\bot$.
For this, we bound the number of indices tagged with pairs of the form 
$(X,\alpha)$, with $X=B$ (resp., $X=E$) and $\alpha\in\XArgs{\varphi}$.
By construction, for each tag $(X,\alpha)$, the 
"special $\hsX*$-witnesses" of $\alpha$ form an 
"intersecting" ("anti@anti-chain"-)"chain" $\cI_{X,\alpha}$ of
"prefix-minimal" (resp., "suffix-minimal") 
intervals for $\alpha$. 
Moreover, we know that:
\begin{itemize}
\item All intervals in $\cI_{X,\alpha}$ are non-singleton.

      This is because the only scenario where a 
      \emph{singleton} "special $\hsX*$-witness" arises is when 
      $\cI'$ is "anti-chain", $n=2$, and $\max(I_1) = \min(I_2)$.
      This scenario is however excluded by the fact that
      $n > 2^{2|\varphi| - |\BArgs{\varphi}| - |\EArgs{\varphi}|} \ge 2$.
\item The intervals in $\cI_{X,\alpha}$ have pairwise distinct
      right (resp., left) endpoints.
      
	  This is because those endpoints belong to
	  the intervals $\rightDeltaIntervals{i}$
	  (resp., $\leftDeltaIntervals{i}$), which
	  are pairwise disjoint across all $i$'s.
\item The cardinality of each ("anti@anti-chain"-)"chain" 
      $\cI_{X,\alpha}$ is at most $2^{2|\alpha|}$.

      This is thanks to the previous properties and because 
      $\alpha$ is a proper subformula of $\varphi$, which 
      was assumed to be a smallest formula violating 
      the bound \eqref{eq:bound}.
\end{itemize}
In view of the last property, we derive that the number
of indices that are \emph{not} tagged with $\bot$ is 
\begin{align*}
  n' &~\le~ \sum\nolimits_{\alpha\in\BArgs{\varphi}} 2^{2|\alpha|} 
            ~+~ 
            \sum\nolimits_{\alpha\in\EArgs{\varphi}} 2^{2|\alpha|}
  \\
     &~\le~ 2^{\sum_{\alpha\in\BArgs{\varphi}} 2|\alpha| \:+\: 
               \sum_{\alpha\in\EArgs{\varphi}} 2|\alpha| } \ ,
\end{align*}
where the last inequality follows from  majorating sums with products.
Next, recall that
$n > 2^{2|\varphi| - |\BArgs{\varphi}| - |\EArgs{\varphi}|}$,
and hence the number of indices $1\le i<n$ that are tagged 
with $\bot$ is
\[
  n - 1 - n' 
  ~\ge~ 2^{2|\varphi| - |\BArgs{\varphi}| - |\EArgs{\varphi}|}
        ~-~
        2^{\sum_{\alpha\in\BArgs{\varphi}} 2|\alpha| \:+\: 
           \sum_{\alpha\in\EArgs{\varphi}} 2|\alpha|} \ .
\]
We prove that the right hand-side number is always positive 
by showing that 
$2|\varphi| - |\BArgs{\varphi}| - |\EArgs{\varphi}| 
 > 
 \sum_{\alpha\in\BArgs{\varphi}} 2|\alpha| \:+\: 
 \sum_{\alpha\in\EArgs{\varphi}} 2|\alpha|$.
We distinguish two cases, depending on whether
or not $\varphi$ contains modal operators.
If $\varphi$ contains no modal operators, then
$2|\varphi| - |\BArgs{\varphi}| - |\EArgs{\varphi}| 
 = 2|\varphi| 
 > 0
 = \sum_{\alpha\in\BArgs{\varphi}} 2|\alpha| \:+\: 
   \sum_{\alpha\in\EArgs{\varphi}} 2|\alpha|$.
Otherwise, if $\varphi$ contains at least one modal 
operator, then we observe that \emph{(i)} the size of $\varphi$
is at least the sum of the sizes of the subformulas 
$\hsX\alpha$, for $X \in \{ B, E \}$ and $\alpha\in\XArgs{\varphi}$,
which are $|\hsX\alpha| = |\alpha| + 1$,
and \emph{(ii)}
$\sum_{\alpha\in\XArgs{\varphi}}(|\alpha| + 1) = 
 \big(\sum_{\alpha\in\XArgs{\varphi}}|\alpha|\big) + |\XArgs{\varphi}|$, for $X \in \{ B, E \}$.
From this we derive:
\begin{align*}
  & 2|\varphi| - |\BArgs{\varphi}| - |\EArgs{\varphi}|
  \\[1ex]
  \ge~ &\sum\nolimits_{\alpha\in\BArgs{\varphi}} 2|\alpha| 
        ~+~
        \sum\nolimits_{\alpha\in\EArgs{\varphi}} 2|\alpha|
        ~+~
        |\BArgs{\varphi}| ~+~ |\EArgs{\varphi}| 
  \\[1ex]
  >~  &\sum\nolimits_{\alpha\in\BArgs{\varphi}} 2|\alpha| 
        ~+~
        \sum\nolimits_{\alpha\in\EArgs{\varphi}} 2|\alpha| \ .
\end{align*}
We have just shown that at least one index $i$ 
must be tagged with $\bot$, which completes the proof
of the lemma.
\end{proof}

Putting together Lemmas \ref{lem:chain-antichain}
and \ref{lem:chain-bound}, we obtain the desired bound
for an arbitrary "intersecting family" of non-singleton
"prefix@prefix-minimal"/"suffix-minimal" intervals for 
$\varphi$:


\begin{restatable}{corollary}{CoreBoundCorollaryI}
\label{cor:core-bound-i}
Let $\cS$ be an "interval structure", 
$\varphi$ a $\BEpi$ formula, 
and $\cI$ an "intersecting family" of non-singleton 
"prefix-minimal" (resp., "suffix-minimal") intervals 
for $\varphi$. 
Then the number of distinct right (resp., left) endpoints
of intervals of $\cI$ is at most $2^{4|\varphi|}$.
\end{restatable}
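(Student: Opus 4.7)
The plan is to derive the bound by straightforwardly combining Lemmas \ref{lem:chain-antichain} and \ref{lem:chain-bound}. The target quantity --- the number of distinct right (resp., left) endpoints of intervals of $\cI$ --- is naturally expressed as the cardinality of an appropriate subfamily $\cI^\star\subseteq\cI$, so the proof is really a bookkeeping step around the two preceding lemmas.

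First, I would pick, for each distinct right (resp., left) endpoint $e$ occurring among intervals of $\cI$, exactly one representative interval $I_e\in\cI$ whose right (resp., left) endpoint is $e$, and define $\cI^\star$ to be the resulting set of representatives. By construction $|\cI^\star|$ equals the number of distinct right (resp., left) endpoints of $\cI$, and $\cI^\star$ inherits from $\cI$ all the properties needed to invoke Lemma \ref{lem:chain-bound}: it is an intersecting family of non-singleton prefix-minimal (resp., suffix-minimal) intervals for $\varphi$, and it additionally enjoys the pairwise-distinct-right-endpoint (resp., left-endpoint) condition by the very choice of representatives.

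Next, I would equip $\cI^\star$ with the partial order given by containment. Any chain or antichain $\cI'\subseteq\cI^\star$ still satisfies all the hypotheses of Lemma \ref{lem:chain-bound}, so that lemma directly yields $|\cI'|\le 2^{2|\varphi|}$. Thus every chain and every antichain of $\cI^\star$ has cardinality at most $n := 2^{2|\varphi|}$.

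Finally, applying Lemma \ref{lem:chain-antichain} to the poset $\cI^\star$ with this value of $n$ gives $|\cI^\star|\le n^2 = 2^{4|\varphi|}$, which is exactly the claimed bound on the number of distinct right (resp., left) endpoints. No substantive obstacle is expected here --- the conceptual work was carried out in Lemma \ref{lem:chain-bound} (the chain/antichain bound via $\varphi$-profiles and special witnesses) and in the Dilworth-style Lemma \ref{lem:chain-antichain}; the only thing to be careful about in this final step is the passage from counting intervals to counting endpoints, which is precisely what the choice of representative subfamily $\cI^\star$ is designed to handle.
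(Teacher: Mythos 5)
Your proposal is correct and matches the paper's intended argument exactly: the paper itself states that the corollary is obtained by "putting together" Lemmas \ref{lem:chain-antichain} and \ref{lem:chain-bound}, and your representative subfamily $\cI^\star$ is precisely the right bookkeeping device to pass from counting endpoints to counting intervals before invoking the two lemmas.
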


We conclude this part by showing how "prefix-minimal" intervals
for $\varphi$ can be characterized using fresh 
propositional letters
and suitable formulas $\dec{\varphi}$ and $\enc{\varphi}$ 
(a similar corollary can be stated for "suffix-minimal" intervals).

\begin{restatable}{corollary}{CoreBoundCorollaryII}
\label{cor:core-bound-ii}
Consider a $\BEpi$ formula $\varphi$ 
over a "signature" $\signature$
and let $\signature' = \signature \uplus\{p_1,\dots,p_m,\ell,r,s\}$,
where $p_1,\dots,p_m$, $\ell,r,s$ are fresh propositional letters
and $m=4|\varphi|$ (this $m$ is precisely the exponent appearing 
in the bound of Corollary \ref{cor:core-bound-i}).
\AP
Define the $\BEpi$ formulas%
\footnote{Note that, despite the notation, 
          the formula $\dec{\varphi}$ only depends 
          on the signature and the size of $\varphi$,
          whereas $\enc{\varphi}$ depends entirely on $\varphi$.}
\begin{align*}
  ""\dec*{\varphi}"" 
  &~=~~ \Big( \hsB(\pi\wedge\ell) ~\wedge~ \hsE(\pi\wedge r) 
        ~\wedge~ \\[-1ex]
  &~~~~ \phantom{\Big(} ~~ \bigwedge\nolimits_{i=1,\dots,m} \big(\hsB(\pi\wedge p_i) \iff 
                                                \hsE(\pi\wedge p_i)\big) \Big) \\
  &~\vee~~ (\pi \wedge s) 
  \\[2ex]
  ""\enc*{\varphi}""
  &~=~~ 
  \underbrace{\big(\dec{\varphi} \wedge \neg\hsB\dec{\varphi} ~\rightarrow~ \varphi \big)}%
             _{\midward{\text{\scriptsize\qquad\qquad\quad
                                         prefix-minimial intervals for $\dec{\varphi}$ satisfy $\varphi$}}}  \\[1ex]
  &~\wedge~~ 
  \underbrace{\big(\varphi \wedge \neg\hsB\dec{\varphi} ~\rightarrow~ \dec{\varphi} \big)}%
             _{\midward{\text{\scriptsize\qquad\qquad\quad 
                                         prefix-minimal intervals for $\varphi$ satisfy $\dec{\varphi}$}}}
\end{align*}
We have that $\enc{\varphi}$ is an "expander" from 
$\signature$ to $\signature'$
and that $\hsB\varphi$ and $\hsB\dec{\varphi}$ are 
"equivalent" over all "interval structures" that make 
$\enc{\varphi}$ "valid@valid over".
\end{restatable}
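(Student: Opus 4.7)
The plan is to prove Corollary~\ref{cor:core-bound-ii} by constructing, for every "interval structure" $\cS$ over $\signature$, an "expansion" $\cS'$ over $\signature'$ that makes $\enc{\varphi}$ "valid over" $\cS'$, and then deriving $\hsB\varphi \equiv \hsB\dec{\varphi}$ from the characterization of $\hsB$ via "prefix-minimal" intervals. Let $M_\varphi$ denote the set of "prefix-minimal" intervals for $\varphi$ in $\cS$, and note that each left endpoint $x$ has at most one right endpoint $y^*_x$ with $[x, y^*_x] \in M_\varphi$. In $\cS'$ I label a point $x$ by $s$ iff $[x,x] \in M_\varphi$, by $\ell$ iff there is a non-singleton $[x,y] \in M_\varphi$, and by $r$ iff there is a non-singleton $[x',x] \in M_\varphi$. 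The Boolean letters $p_1, \ldots, p_m$ encode a color $c(x) \in \{0,1\}^m$ at every $\ell$- or $r$-labeled point, subject to (i) $c(x) = c(y^*_x)$ whenever $x$ is labeled $\ell$, and (ii) $c(x) \neq c(y)$ whenever $x$ is labeled $\ell$, $y$ is labeled $r$, and $x < y < y^*_x$.

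The heart of the proof is to show that such a coloring exists. I partition the $\ell$-labeled points into equivalence classes according to the value of $y^*_x$; each class $C$ has a common right endpoint $y_C$ that must inherit the class color. Setting $x_C^{\min} = \min \{ x : y^*_x = y_C \}$ and associating to each class the open interval $(x_C^{\min}, y_C)$, a direct computation shows that two distinct classes $C_1, C_2$ (say with $y_{C_2} < y_{C_1}$) are forced to receive distinct colors iff $x_{C_1}^{\min} < y_{C_2}$, which in turn is equivalent to the overlap of their associated open intervals. Hence the constraint graph is an interval graph, and therefore perfect. In any clique $C_1, \ldots, C_k$ ordered by $y_{C_1} < \cdots < y_{C_k}$, every interval $[x_{C_j}^{\min}, y_{C_j}]$ must contain the point $y_{C_1}$, so these $k$ non-singleton "prefix-minimal" intervals for $\varphi$ form an "intersecting family" with pairwise distinct right endpoints; Corollary~\ref{cor:core-bound-i} then gives $k \le 2^{4|\varphi|} = 2^m$. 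Perfectness then yields a proper coloring using at most $2^m$ colors, which fits into the $m$ Boolean letters.

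With the expansion in place, I verify the two implications of $\enc{\varphi}$. For the first, let $I$ be "prefix-minimal" for $\dec{\varphi}$. If $I$ is a singleton, it must be labeled $s$, hence lies in $M_\varphi$ and "satisfies" $\varphi$. Otherwise $I = [a,b]$ with $a$ labeled $\ell$, $b$ labeled $r$, and $c(a) = c(b)$: the case $y^*_a < b$ is excluded because $[a, y^*_a]$ would then be a proper prefix of $I$ satisfying $\dec{\varphi}$, and the case $b < y^*_a$ is excluded by constraint (ii); thus $b = y^*_a$, so $[a,b] \in M_\varphi$ and $I \models \varphi$. For the second implication, if $I \models \varphi$ and no proper prefix of $I$ satisfies $\dec{\varphi}$, then the unique "prefix-minimal" prefix $I^\circ$ of $I$ for $\varphi$ "satisfies" $\dec{\varphi}$ by construction (using constraint (i) in the non-singleton case, or the $s$-label in the singleton case), and the hypothesis forces $I^\circ = I$.

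Finally, the two implications of $\enc{\varphi}$ imply that "prefix-minimal" intervals for $\varphi$ and for $\dec{\varphi}$ coincide over $\cS'$: if $I$ is "prefix-minimal" for $\varphi$ and some proper prefix $J$ of $I$ satisfied $\dec{\varphi}$, then the smallest such $J$ would be "prefix-minimal" for $\dec{\varphi}$ and hence, by the first implication, a proper prefix of $I$ satisfying $\varphi$, a contradiction; the reverse inclusion is symmetric. Since $\hsB\psi$ holds at an interval $I$ iff some "prefix-minimal" interval for $\psi$ is a proper prefix of $I$, one concludes that $\hsB\varphi$ and $\hsB\dec{\varphi}$ are "equivalent" over $\cS'$. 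I expect the coloring step to be the main obstacle: it combines Corollary~\ref{cor:core-bound-i} (bounding cliques) with the perfectness of interval graphs (converting this into a chromatic-number bound).
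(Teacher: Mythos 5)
Your strategy is the same as the paper's (mark the endpoints of prefix-minimal intervals with $\ell,r,s$, disambiguate overlapping witnesses by an $m$-bit colour per point, then check the two implications of $\enc{\varphi}$ and deduce that prefix-minimal intervals for $\varphi$ and for $\dec{\varphi}$ coincide). Your verification of the two implications and your final derivation of the equivalence of $\hsB\varphi$ and $\hsB\dec{\varphi}$ are correct \emph{given} the colouring. The gap is in the colouring step, and it is fatal: constraints (i) and (ii) can be jointly unsatisfiable, with any number of colours. Your interval-graph argument colours \emph{classes}, but the colour must in the end be a single tuple attached to each \emph{point}, and a point $z$ can simultaneously be a member of one class $C'$ (because some $[z,y_{C'}]$ is prefix-minimal) and the distinguished right endpoint $y_C$ of a different class $C$ (because some $[v,z]$ is prefix-minimal). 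Constraint (i) then forces $\mathrm{colour}(C)=c(z)=\mathrm{colour}(C')$ even when your graph contains the edge $\{C,C'\}$; your construction assigns two different colours to the same point and no proper colouring of the quotient exists.

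Concretely, take $\signature=\{a_1,b_1,a_2,b_2,a_3,b_3\}$, $\varphi=\bigvee_{j=1,2,3}\big(\hsB(\pi\wedge a_j)\wedge\hsE(\pi\wedge b_j)\big)$, and the four-point structure whose singletons $1,2,3,4$ are labelled $\{a_1\},\{a_2\},\{b_1,a_3\},\{b_2,b_3\}$. The non-singleton prefix-minimal intervals for $\varphi$ are exactly $[1,3]$, $[2,4]$, $[3,4]$. Constraint (i) gives $c(1)=c(3)$, $c(2)=c(4)$, $c(3)=c(4)$, hence $c(2)=c(3)$; constraint (ii) with $x=2$, $y=3$, $y^*_2=4$ gives $c(2)\neq c(3)$. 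Worse, this is not an artefact of your particular constraints: validity of $\enc{\varphi}$ itself forces all of these conditions (no point can carry $s$, and $[1,2]$, $[2,3]$ cannot satisfy $\dec{\varphi}$, by the first conjunct; hence the second conjunct applied to $[1,3]$, $[2,4]$, $[3,4]$ forces them to satisfy $\dec{\varphi}$, while the first conjunct applied to $[2,3]$ forces $c(2)\neq c(3)$). So no expansion of this structure makes $\enc{\varphi}$ valid, and the expander claim fails for $\enc{\varphi}$ as written. For what it is worth, the paper's own proof has the same blind spot: it asserts that Corollary~\ref{cor:core-bound-i} yields an annotation in which intersecting prefix-minimal intervals that are not suffix-related get different tuples, but $[1,3]$ and $[3,4]$ above are such a pair sharing the point $3$, which can carry only one tuple. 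Some repair of the construction (e.g., disjoint letter sets for left-endpoint and right-endpoint colours in $\dec{\varphi}$) appears to be needed before either argument can go through, so you should not expect to close this gap while keeping $\dec{\varphi}$ exactly as stated.
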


\begin{proof}
Let us first explain the intended use of the fresh propositional
letters $p_1,\dots,p_m$, $\ell,r,s$.
The letters $p_1,\dots,p_m$ will annotate points of an 
"interval structure" with $m$-tuples of bits, thus enumerating
an exponentially-large set (e.g.,~$\{1,\dots,2^m\}$). 
More precisely, the left and right endpoints of every non-singleton 
"prefix-minimal" interval for $\varphi$ will be identified by having 
labels $\ell$ and $r$, respectively, and the same $m$-tuple of bits 
--- this correlation between endpoints is checked by the first disjunct 
of $\dec{\varphi}$.
Singleton "prefix-minimal" intervals for $\varphi$ will instead be 
identified by the special label $s$ --- this is checked in the second
disjunct of $\dec{\varphi}$.
%
%
Another important constraint is that every two intersecting intervals 
that are non-singleton, "prefix-minimal" for $\gamma$, and not a 
"suffix" one of another will have their endpoints marked by different 
$m$-tuples of bits.

Corollary \ref{cor:core-bound-i} guarantees the existence 
of an annotation satisfying all the above constraints.
Such an annotation is enforced precisely by the formula $\enc{\varphi}$,
which turns out to be an "expander" from $\signature$ to $\signature'$
(namely, every "interval structure" over $\signature$ admits an 
"expansion" over $\signature'$ that makes $\enc{\varphi}$ "valid@valid over").
Moreover, if the annotation is correct, namely, if 
$\enc{\varphi}$ is "valid over" an expanded "interval structure",
then every "prefix-minimal" interval for $\varphi$ is also 
a "prefix-minimal" interval for $\dec{\varphi}$, and vice versa.
Note that there may still exist intervals that "satisfy" 
$\varphi$ but not $\dec{\varphi}$, or vice versa; however,
those intervals will always contain proper "prefixes" that
"satisfy" both $\varphi$ and $\dec{\varphi}$.
Overall, this proves that the two formulas $\hsB\varphi$ 
and $\hsB\dec{\varphi}$ are "equivalent" over
expanded "interval structures" that make $\enc{\varphi}$
"valid@valid over".
\end{proof}

\subsection{Normalization procedure}\label{subsec:proof-of-theorem}

We are now ready to describe the "normalization" procedure underlying 
Theorem \ref{thm:main}. Let $\psi$ be a $\BEpi$ formula.
The "normalization" of $\psi$ consists of repeatedly applying some 
rewriting steps that preserve "satisfiability" and progressively 
reduce the number of distinct subformulas of "depth" larger than $2$, 
until a "shallow normal form" is eventually obtained.

Every rewriting step is applied to a formula of the form 
$\psi_i \:\wedge\: \hsGu\xi_i$ over a "signature" $\signature_i$ 
(initially, $\psi_0 = \psi$, $\xi_i = \true$, and $\signature_i = \signature$), 
and results in an "equi-satisfiable" formula $\psi_{i+1} \:\wedge\: \hsGu\xi_{i+1}$ 
over an extended "signature" $\signature_{i+1}$.
To perform the rewriting step, we must choose a subformula $\hsX\varphi$ 
of $\psi_i \:\wedge\: \hsGu\xi_i$, for some $X\in\{B,E\}$, 
that has "depth" $d>2$ and that does not occur under the scope of any other modal 
operator, except possibly the operator $\hsGu$ that has $\xi_i$ as argument.
We then use Corollary \ref{cor:core-bound-ii} to obtain an "expander" 
$\enc{\varphi}$ from $\signature_i$ to $\signature_{i+1}$ and a 
formula $\hsX\dec{\varphi}$ "equivalent" to $\hsX\varphi$ over every 
"interval structure" that makes $\enc{\varphi}$ "valid@valid over".
We then rewrite $\psi_i \:\wedge\: \hsGu\xi_i$ into the formula
\begin{align*}
\begin{array}{c}
  \underbrace{\psi_i\big[\hsX\varphi \:/ \hsX\dec{\varphi}\big]}_{\psi_{i+1}}
    ~\wedge~ \\
  \hsGu\big( \underbrace{\xi_i\big[\hsX\varphi \:/ \hsX\dec{\varphi}\big] 
                           \:\wedge\: 
                           \enc{\varphi}}_{\xi_{i+1}} \big) 
\end{array}
  \tag{$\dagger$}
\end{align*}
Thanks to distributivity of $\hsGu$ with respect to $\wedge$, 
the formula ($\dagger$) is "equivalent" to 
$(\psi_i \:\wedge\: \hsGu\xi_i)[\hsX\varphi \:/ \hsX\dec{\varphi}]
  ~\wedge~ 
  \hsGu\enc{\varphi}$.
Moreover, thanks to Lemma \ref{lem:expander-equivalence}, 
the latter formula is "equi-satisfiable" as $\psi_i \:\wedge\: \hsGu\xi_i$.
This completes the description of a rewriting step.

Let us now analyse the complexity of the "normalization" procedure.
The procedure terminates when one cannot choose any subformula 
$\hsX\varphi$ with the desired properties: in this case the rewritten formula
$\psi_i \:\wedge\: \hsGu\xi_i$ turns out to be in "shallow normal form"
and we can let $\psi^\star = \psi_i \:\wedge\: \hsGu\xi_i$.
To bound the number of rewriting steps, we study how a single 
rewriting step affects the number of distinct subformulas
of "depth" larger than $2$.
As for $\hsX\varphi$, we observe that this subformula does not 
occur anymore in the rewritten formula $\psi_{i+1} \:\wedge\: \hsGu\xi_{i+1}$
(in particular, the inteded use of $\enc{\varphi}$ is to entail
$\hsX\varphi \:\leftrightarrow\: \hsX\dec{\varphi}$, but the chosen 
writing in the statement of Corollary \ref{cor:core-bound-ii}
avoids having $\varphi$ under the scope of a modal operator, 
thus guaranteeing that $\enc{\varphi}$ has "depth" at most $2$). 
On the other hand, new occurrences of subformulas may emerge in 
$\psi_{i+1} \:\wedge\: \hsGu\xi_{i+1}$:
these are either formulas of "depth" at most $2$ 
(e.g., $\hsX\dec{\varphi}$) or copies of formulas that already occur
in $\psi_i \:\wedge\: \hsGu\xi_i$ (e.g., $\varphi$).
Summing up, the effect of a rewriting step is to decrease the 
number of \emph{distinct} subformulas of "depth" larger than $2$.
This implies that the number of rewriting steps is at most linear in
the size of the original formula $\psi$.
Finally, each rewriting step is purely syntactical and can be carried out 
efficiently on the involved formula $\psi_i \:\wedge\: \hsGu\xi_i$,
whose size grows at most linearly with $i$.
This shows that the entire "normalization" procedure 
can be performed in polynomial time w.r.t.~$|\psi|$,
and completes the proof of Theorem \ref{thm:main}.
\qed


\section{Complexity of the satisfiability problem}\label{sec:complexity}

In this section, we build up on the previous normalization result
to prove a tight complexity bound:

\begin{theorem}\label{thm:complexity}
The "satisfiability" problem for $\BE$ logic restricted to
"homogeneous" "interval structures" is $\expspace$-complete.
\end{theorem}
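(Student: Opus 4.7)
The lower bound, $\expspace$-hardness, is already established in
\cite{DBLP:journals/tcs/BozzelliMMPS19} for the "model-checking" problem
and transfers immediately to "satisfiability" under "homogeneity", as
noted in the introduction; so the plan addresses only the matching
upper bound. Starting from an arbitrary $\BE$ formula $\psi$, I would
first apply Proposition~\ref{prop:homogeneus-normal-form} and then
Theorem~\ref{thm:main} to obtain in polynomial time an "equi-satisfiable"
formula $\psi^\star = \psi_0 \wedge \hsGu\xi$ in "shallow normal form",
with both $\psi_0$ and $\xi$ of "depth" at most $2$; in particular
$|\psi^\star|$ is polynomial in $|\psi|$. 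The task then reduces to
deciding "satisfiability" of $\psi^\star$ in $\expspace$.

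The central structural fact I would exploit is that, under
"homogeneous normal form", the truth of a depth-$2$ formula at an
interval $[x,y]$ is fully determined by a bounded amount of
``local'' data: the point-types (labels) of $x$ and $y$, the length
class of $[x,y]$ (singleton, length-$2$, or at least three), and the
set of point-types occurring strictly inside $(x,y)$. Indeed,
descending one modality yields a depth-$1$ subformula whose truth
at any "prefix" or "suffix" is entirely determined by the labels of
that sub-interval's endpoints, so depth-$2$ reasoning at $[x,y]$
reduces to checking presence or absence of specific point-types at
the two endpoints and in the strict interior. Consequently, the
global constraint $\hsGu\xi$ amounts to a local predicate on every
sub-interval, and can be verified by a nondeterministic automaton
$\mathcal{A}$ that reads the singleton labels left-to-right while
maintaining, for each pair $(\tau,S)$ consisting of a past left-endpoint
point-type $\tau$ and a set $S$ of point-types seen since some
occurrence of $\tau$, a bit recording whether such an occurrence exists.
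A small extra component in the state tracks the status of $\psi_0$
at the pending outer interval and nondeterministically commits to a
right endpoint.

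The state space of $\mathcal{A}$ has size at most doubly exponential
in $|\psi|$, since the expanded "signature" $\signaturebis$ has size
polynomial in $|\psi|$ and the state is essentially a subset of
pairs of point-types. Nondeterministic emptiness of $\mathcal{A}$
can then be decided on the fly in space proportional to the
logarithm of the state count, that is, in $2^{O(|\psi|)}$ space.
Since $\nexpspace = \expspace$ by Savitch's theorem, this yields
the claimed $\expspace$ upper bound, matching the lower bound.

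The hardest step will be rigorously verifying the locality claim
and designing $\mathcal{A}$ so that (i) its state certifies
satisfaction of $\xi$ at \emph{every} sub-interval, including ones
spanning long portions of the input, and (ii) the outer depth-$2$
conjunct $\psi_0$ is accommodated with only a modest further blow-up.
The delicate point is the interaction between $\hsGu\xi$ and
$\psi_0$: the automaton must simultaneously maintain the
``interior-set'' summary needed by each pending candidate "prefix" or
"suffix" witnessing $\psi_0$, as well as by every sub-interval being
checked against $\xi$, while keeping the state space within the
doubly-exponential budget. A possible alternative, equally viable
as a plan B, is to replace the automaton construction by a
singly-exponential small-model theorem obtained through a contraction
argument that collapses consecutive positions sharing the same
``cumulative type'', followed by a direct NEXPSPACE guess-and-check
procedure.
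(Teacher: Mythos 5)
Your overall strategy---normalize via Proposition~\ref{prop:homogeneus-normal-form} and Theorem~\ref{thm:main} to an "equi-satisfiable" $\psi_0 \wedge \hsGu*\xi$ of polynomial size, then decide its satisfiability by an on-the-fly emptiness test of a word automaton over singleton labels whose states are writable in exponential space, concluding by Savitch's theorem---is the same as the paper's. The paper implements it modularly: compositional depth-$1$ and depth-$2$ types (Lemmas~\ref{lem:depth1-composition}--\ref{lem:depth2-evaluation}) yield, in polynomial space, an automaton for each depth-$2$ conjunct (Lemma~\ref{lem:automaton}); the conjunct $\hsGu*\xi$ is then handled by taking the infix closure of $\cA_{\neg\xi}$ and complementing it with an online subset construction, which is the sole source of the exponential-space cost.

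Two points in your sketch need repair before it goes through. First, the locality claim is understated: the truth of a depth-$1$ subformula at a prefix $[x,y']$ is determined by the labels of $x$ and $y'$ \emph{together with the length class of $[x,y']$} (one point, two points, or more), so your summary must record not only the set of point-types strictly inside the interval but also, separately, the types of the points at distance one from each endpoint (consider a depth-$2$ formula asserting that the prefix of exactly two points ends in a $p$-labelled point); this is precisely what the paper's depth-$1$ types $(S,T,B,E)$ and the sets $\sB,\sE$ inside a depth-$2$ type are engineered to capture. Second, and more seriously, your complexity accounting does not match your state definition: a state storing one bit per pair $(\tau,S)$ with $\tau$ a point-type and $S$ a \emph{set} of point-types is a characteristic vector over an index set of doubly exponential size, so writing it down would require doubly exponential space and would only yield a doubly-exponential-space bound. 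To stay within $\expspace$ you must either observe that, at any fixed reading position, the realized sets $S$ (labels seen since position $x$, as $x$ ranges over the past) form a chain under inclusion, so that only exponentially many pairs $(\tau,S)$ are ever realized simultaneously and the state can be listed explicitly in exponential space; or follow the paper's route, where the states of the complemented automaton are subsets of the exponentially many states of $\cA_{\neg\xi}$ and are therefore directly representable by exponential-length bit vectors. With either fix, your plan coincides in substance with the paper's proof.
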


An $\expspace$ lowerbound for $\BE$ under "homogeneity"
was already proven in \cite{DBLP:journals/tcs/BozzelliMMPS19}, 
so we focus on the upperbound.
In view of Proposition \ref{prop:homogeneus-normal-form} and
Theorem \ref{thm:main}, given any $\BE$ formula $\psi$, one
can compute in polynomial time a $\BEpi$ formula $\psi^\star$
that is "equi-satisfiable" over "homogeneous" "interval structures".
Of course, this also means that $\psi^\star$ has size at most 
polynomial in $|\psi|$.
We argue below that one can test "satisfiability" of 
a formula in "shallow normal form"
in exponential space with respect to the size of the formula itself. 
Together with the previous observations, this proves Theorem \ref{thm:complexity}.



\knowledge{notion}
	| logical type
	| types
	| type
	| depth-$0$ type
	| depth-$0$ types
	| depth-$d$ type
	| depth-$d$ types
\knowledge{notion}
	| depth-$1$ type
	| depth-$1$ types
\knowledge{notion}
	| depth-$2$ $\varphi$-type
	| depth-$2$ $\varphi$-types
	| depth-$2$ $\psi$-type
	| depth-$2$ $\psi$-types
	| depth-$2$ type
	| depth-$2$ types
\knowledge{notion}
	| \lenone
	| \lenone*
\knowledge{notion}
	| \lentwo
	| \lentwo*
\knowledge{notion}
	| \lenthree
	| \lenthree*
\knowledge{notion}
	| depth-$0$ composition
	| depth-$d$ composition
\knowledge{notion}
	| depth-$1$ composition
\knowledge{notion}
	| depth-$2$ composition
\knowledge{notion}
	| \Depth
	| \Depth*{1}{\varphi}
\knowledge{notion}
	| \vdash
	| \cT \vdash* \alpha
\knowledge{notion}
	| dummy depth-$1$ type $\dummy*$
	| dummy depth-$1$ type
	| dummy type
	| dummy 
\knowledge{notion}
	| left context
	| left contexts
	| context
	| contexts
\knowledge{notion}
	| right context
	| right contexts

\subsection{Composition of logical types}\label{subsec:types}

We need to formalize a notion of 
"logical type", similar to the notion of "profile" used in the proof of 
Lemma \ref{lem:chain-bound}, that not only determines which formulas
"hold at" a given interval, but also satisfies mild compositional 
properties, that is, under suitable conditions, one can compute the 
"type" of the sum of two adjacent intervals on the basis of the "types" of the 
original intervals.
It will be convenient to define "types" separately for formulas of 
"depth" $0$, $1$, and $2$ (there is no need to consider higher "depths", 
as we assume to deal with formulas in "shallow normal form").
We will first present the rather simple definitions and properties of 
"depth-$0$@depth-$0$ types" and "depth-$1$ types", and then focus on the 
more complex notion of "depth-$2$ type".

\paragraph{Depth-$0$ and depth-$1$ types}
We fix, once and for all, an "interval structure" $\cS=\intervalstructure$
and we assume that all formulas are over the "signature" $\signature$ of $\cS$.

\begin{definition}\label{def:depth0-type}
\emph{The ""depth-$0$ type""} of an interval $I$,
denoted $\type{0}{}(I)$, is either the set 
$\{\pi\} \cup \{p\in\signature \::\: \cS,I\models \pi \wedge p\}$
or the empty set, depending on whether $I$ is a singleton or not.

\AP
\emph{The ""depth-$1$ type""} of an interval $I=[x,y]$
is the quadruple $\type{1}{}(I)=(S,T,B,E)$, where
$S$ is the symbol $""\lenone*""$, $""\lentwo*""$, or
$""\lenthree*""$, depending on whether $I$ contains one point,
two points, or more, $T = \type{0}{}(I)$, $B = \type{0}{}([x,x])$, and
$E = \type{0}{}([y,y])$.
\end{definition}

It is easy to see that "depth-$0$@depth-$0$ types" 
(resp., "depth-$1$@depth-$1$ types") "types" of 
adjacent intervals can be composed to form 
the "depth-$0$@depth-$0$ type" 
(resp., "depth-$1$@depth-$1$ type") 
"type" of the sum of the two intervals.
One can also verify that the "depth-$0$@depth-$0$ type" 
(resp., "depth-$1$@depth-$1$ type") 
"type" of an interval determines which
formulas of "depth" $0$ (resp., "depth" at most $1$) 
"hold at" that interval. 
These simple results are formalized in the next two lemmas 
below.

\begin{restatable}{lemma}{DepthOneComposition}
\label{lem:depth1-composition}
\AP
For both $d=0$ and $d=1$, there is a 
""composition@depth-$0$ composition""%
""@depth-$1$ composition""
operator $\cdot{d}$ on "depth-$d$ types" 
that is computable in polynomial time and such that,
for all pairs of adjacent intervals $I,J$, with $\max(I)+1=\min(J)$,
$\type{d}{}(I) \cdot{d} \type{d}{}(J) = \type{d}{}(I\cup J)$.
\end{restatable}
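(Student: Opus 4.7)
The plan is to unfold the definitions of "depth-$0$ type" and "depth-$1$ type" and exhibit the composition operators directly. The observation that makes everything work is that for any two adjacent intervals $I, J$ the union $I \cup J$ is non-singleton (it contains at least two points), so the $T$-component of a composed "depth-$1$ type" is forced to be empty.

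First I would handle the case $d=0$. Since $\type{0}{}$ of any non-singleton interval is empty by definition, and $I \cup J$ always has at least two points, I simply define $T \cdot{0} T' = \emptyset$ for all "depth-$0$ types" $T, T'$. This is trivially well-defined, matches $\type{0}{}(I\cup J)$ as required, and runs in constant time.

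Next I would handle the case $d=1$. Writing $\type{1}{}(I) = (S_I, T_I, B_I, E_I)$ and $\type{1}{}(J) = (S_J, T_J, B_J, E_J)$, I would set
\[
  (S_I, T_I, B_I, E_I) \cdot{1} (S_J, T_J, B_J, E_J)
  ~=~
  (S, \emptyset, B_I, E_J),
\]
where $S = \lentwo$ if $S_I = S_J = \lenone$ and $S = \lenthree$ otherwise. The correctness of the four components rests on four elementary observations about $I \cup J$: its minimum is $\min(I)$, its maximum is $\max(J)$, it is non-singleton, and its cardinality equals $2$ precisely when both $I$ and $J$ are singletons. Each of these reads off immediately from the relevant input coordinates, and the whole rule is computable in constant time on the bit representation of the types (polynomial in $|\signature|$ as the $B$- and $E$-coordinates are subsets of $\signature\cup\{\pi\}$).

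The only step that requires a moment's attention is verifying that the definition is a \emph{function} of the two types, and not of the two intervals; but this is immediate, because $B_I$, $E_J$, and the length tags appearing on the right-hand side are themselves coordinates of the given inputs. There is therefore no real obstacle at this level: "depth-$1$ types" by design factor through independent local ingredients (a length tag together with three depth-$0$ types) that behave transparently under concatenation. The genuine difficulty will appear only at the subsequent level of "depth-$2$ types", where prefixes and suffixes inside $I\cup J$ can mix information coming from both operands; nothing of that kind enters at depths $0$ and $1$, so the present lemma reduces to bookkeeping.
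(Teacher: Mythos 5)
Your proof is correct and follows essentially the same route as the paper's: the depth-$0$ composition is constantly $\emptyset$, and the depth-$1$ composition is $(S, \emptyset, B_I, E_J)$ with the same case split on the length tag (the paper writes the second component as $T \cdot{0} T'$, which equals $\emptyset$ anyway). Your extra remark on well-definedness as a function of the types rather than the intervals is a fair point the paper leaves implicit, but it does not change the argument.
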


\begin{proof}
The "composition@depth-$0$ composition" 
of "depth-$0$ types" is trivial:
for every pair of "depth-$0$ types"
$T,T'$, we simply let $T\cdot{0} T' = \emptyset$.
This is correct because the sum of two adjacent
intervals always results in a non-singleton interval,
whose "depth-$0$ type" is the empty set.

As for the "composition@depth-$1$ composition" 
of two "depth-$1$ types", say
$\cT=(S,T,B,E)$ and $\cT'=(S',T',B',E')$, we let
$\cT \cdot{1} \cT' = (S'', T\cdot{0} T', B, E')$,
where $S''$ is either $\lentwo$ or $\lenthree$ 
depending on whether $S = S' = \lenone$ or not, 
and $T \cdot{0} T'$ is the composition of the
"depth-$0$ types" $T$ and $T'$, as defined just above.
It is immediate to check that if 
$\cT=\type{1}{}([x,y])$
and $\cT'=\type{1}{}([y+1,z])$, then 
$\type{1}{}([x,z]) = \cT \cdot{1} \cT'$.
\end{proof}

\begin{restatable}{lemma}{DepthOneEvaluation}
\label{lem:depth1-evaluation}
For both $d=0$ and $d=1$, 
for every $\BEpi$ formula $\varphi$ of "depth" at most $d$, 
and for all intervals $I,J$ such that 
$\type{d}{}(I)=\type{d}{}(J)$, we have
$\cS,I\models\varphi$ iff $\cS,J\models\varphi$.
Moreover, whether $\cS,I\models\varphi$ holds or not can be decided in
polynomial time given $\varphi$ and $\type{d}{}(I)$.
\end{restatable}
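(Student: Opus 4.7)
The plan is to prove both statements simultaneously by structural induction on $\varphi$.

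For $d=0$: a $\BEpi$ formula of depth $0$ is a Boolean combination of the atoms $\pi$ and $\pi\wedge p$ for $p\in\signature$ (these are the only atoms admitted by the homogeneous normal form). By Definition~\ref{def:depth0-type} both truth values at $I$ are directly read off $\type{0}{}(I)$: the atom $\pi$ holds iff $\type{0}{}(I)\neq\emptyset$, and $\pi\wedge p$ holds iff $p\in\type{0}{}(I)$. Boolean connectives propagate trivially, and a bottom-up traversal yields the polynomial-time evaluator.

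For $d=1$: writing $\type{1}{}(I)=(S,T,B,E)$, Boolean combinations lift by induction and depth-$0$ subformulas are decided from $T$ via the previous case. The only new cases are $\hsB\varphi_0$ and $\hsE\varphi_0$ with $\varphi_0$ of depth $0$, and I would treat them by case-analysing $S$. The set of depth-$0$ types realised by the proper prefixes of $I$ is empty when $S=\lenone$, equals $\{B\}$ when $S=\lentwo$, and equals $\{B,\emptyset\}$ when $S=\lenthree$ (the singleton prefix $[\min(I),\min(I)]$ contributes the type $B$, while every non-singleton proper prefix contributes the empty type). Since by the $d=0$ part the truth of $\varphi_0$ at an interval depends only on its depth-$0$ type, $\hsB\varphi_0$ holds at $I$ iff $\varphi_0$ is satisfied by at least one type in this finite set. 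The case $\hsE\varphi_0$ is symmetric, with $E$ in place of $B$.

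For the polynomial-time bound, I would memoise the truth value of every subformula in a bottom-up traversal: each Boolean node costs $O(1)$, and each modal node triggers at most two further depth-$0$ evaluations on types that can be extracted from $\type{1}{}(I)$ in constant time. The only mildly delicate point will be the $S=\lenthree$ case for the modal operators, where one must remember that, besides the singleton proper prefix of type $B$, there are also non-singleton proper prefixes whose depth-$0$ type is the empty set, and that both contributions have to be checked. Once this observation is in place, the rest is a routine structural induction.
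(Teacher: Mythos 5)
Your proposal is correct and takes essentially the same approach as the paper's proof: atoms are read off the depth-$0$ type, Boolean connectives are handled homomorphically, and the modal cases are settled by a case analysis on the length component $S$, with the key observation that for $S=\lenthree*$ the proper prefixes realise exactly the two depth-$0$ types $B$ and $\emptyset$ (the paper phrases this as checking the two prefixes $[x,x]$ and $[x,x+1]$). No gaps.
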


\begin{proof}
We first prove the claim for $d=0$.
For the case $\varphi = \pi$, we have $\cS,I\models\varphi$ 
if and only if $I$ is a singleton, or, equally, $\pi \in \type{0}{}(I)$.
The case $\varphi = \pi \wedge p$ is trivial as well, as we have
$\cS,I\models\varphi$ if and only if $p\in\type{0}{}(I)$.
It remains to consider the case where $\varphi$ is a 
Boolean combination of the previous atomic formulas. 
In this case, we determine the "evaluation" of $\varphi$ at $I$ 
``homomorphically''  on the basis of the "evaluations" of the atomic formulas. 

Let us now prove the claim for $d=1$. The interesting cases are when 
$\varphi$ has "depth" $0$ or it is of the form $\hsB\alpha$ or $\hsE\alpha$, 
with $\alpha$ again of "depth" $0$. Once the claim is proved for these cases, 
it can be generalized to Boolean combinations of those formulas using the 
same arguments as before.
Let $I=[x,y]$ and $\type{1}{}(I)=(S,T,B,E)$, and recall that
$T=\type{0}{}(I)$, $B=\type{0}{}([x,x])$, and $E=\type{0}{}([y,y])$.

If $\varphi$ has "depth" $0$, then we know that the component
$T$ ($=\type{0}{}(I)$) already determines whether or not $\cS,I\models\varphi$.

If $\varphi = \hsB\alpha$, we further distinguish three subcases, 
depending on $S$.
If $S=\lenone$, then $I$ is a singleton and hence
$\cS,I\not\models\hsB\alpha$.
If $S=\lentwo$, then the only prefix of $I$ is the singleton interval $[x,x]$,
hence $\cS,I\models\hsB\alpha$ iff $\cS,[x,x]\models\alpha$.
Since $\alpha$ has "depth" $0$, the latter condition can be 
decided using the "type@depth-$0$ type" $B=\type{0}{}([x,x])$.
If $S=\lenthree$, then since $\alpha$ is a Boolean combination 
of formulas of the form $\pi$ or $\pi \wedge p$, with $p\in\signature$,
it suffices to consider only two prefixes of $I$:
the singleton interval $J_0=[x,x]$ and the interval $J_1=[x,x+1]$.
In particular, we have $\cS,I\models\hsB\alpha$ if and only if $\cS,J_0\models\alpha$
or $\cS,J_1\models\alpha$. 
Again, the latter two conditions are determined by the "depth-$0$ types"
of $J_0$ and $J_1$, which are $B$ and $\emptyset$, respectively. 
This shows how to determine whether $\cS,I\models\hsB\alpha$
using the "type@depth-$1$ type" $\type{1}{}(I)=(S,T,B,E)$.

The remaining case is that of a formula $\varphi = \hsE\alpha$,
which can be handled by symmetric arguments, using the component 
$E$ instead of $B$.
\end{proof}

\paragraph{Depth-$2$ types}
We now introduce "types" for "depth"-$2$ formulas.
The machinery here is not as neat as one could hope, 
as there is a trade-off between the desired compositional 
properties and the number of possible "depth-$2$ types".
As an example, full compositionality of "types" for "depth"-$2$ formulas 
can only hold if we allow doubly exponentially many "types"  with respect to the 
size of the underlying "signature" --- this can be shown formally using 
arguments based on communication complexity and the fact that a 
"depth"-$2$ formula can describe a Stockmeyer's counter of 
level $2$ \cite{stockmeyer1974}.
In order to ease compositional properties while maintaining the 
number of "types" as low as possible, we will parameterise 
"depth-$2$ types" by a formula and some "contexts".

We first discuss a couple of tentative definitions, with their drawbacks.
Following the same principle used to define "depth-$1$ types", 
one may define the depth-$2$ type of an interval $I$ as $(\cT,\sB,\sE)$,
where $\cT=\type{1}{}(I)$, $\sB=\{ \type{1}{}(J) \::\: J\prefix I\}$, 
and $\sE=\{ \type{1}{}(J) \::\: J\suffix I\}$.
This notion of depth-$2$ type would be fully compositional 
and would determine the evaluation of every "depth"-$2$ formula
in "homogeneous normal form" (proofs omitted).
Unfortunately, there could be doubly exponentially many such types 
with respect to the size of the "signature", and this would not be 
compatible with the intended use that we will make 
in the satisfiability procedure.
Another option would be to parameterise the depth-$2$ type of $I$ 
by a formula $\varphi$ and define it as the triple $(\cT,\sB,\sE)$, 
where $\cT=\type{1}{}(I)$ as before, and $\sB$ (resp., $\sE$) is the 
set of subformulas $\alpha$ of $\varphi$ that "hold at" "proper prefixes" 
(resp., "suffixes") of $I$.
Of course, the resulting type would determine the evaluation of
$\varphi$ at the interval $I$.
This second attempt would also generate at most exponentially 
many "depth"-$2$ type with respect to the size of $\varphi$.
On the other hand, the resulting types would not carry enough information 
to be composable, the reason being that it is not sufficient to know 
which "depth"-$1$ subformulas "hold at" two adjacent intervals in order to 
derive which "depth"-$1$ subformulas "hold at" the union interval.
The appropriate notion of "depth"-$2$ type is somehow a blend of 
the two attempts that we have just discussed. 

Let us now fix some other useful notation and terminology:
\begin{itemize}
\item 
\AP
Given a $\BEpi$ formula $\varphi$, we denote by $""\Depth*{1}{\varphi}""$ 
the set of subformulas of $\varphi$ of "depth" at most $1$.
\item 
Lemma \ref{lem:depth1-evaluation} states that 
the "depth-$1$ type" of an interval $I$ effectively 
determines which formulas of "depth" at most $1$ "hold at" $I$.
This motivates the following notation: 
given a "depth-$1$ type" $\cT$ and a 
formula $\alpha\in\Depth{1}{\varphi}$, 
\AP
we write $""\cT \vdash* \alpha""$ to state 
that $\cS,I \models \alpha$ for some (or, equally, for every) 
interval $I$ such that $\type{1}{}(I)=\cT$
(this latter property can be tested efficiently given $\cT$ and $\alpha$).
\item 
By Lemma \ref{lem:depth1-composition}, "depth-$1$ types" 
are equipped with a "composition@depth-$1$ composition" 
operation $\cdot{1}$ that forms a semigroup structure. 
\AP
We complete the structure into a monoid by introducing 
the ""dummy depth-$1$ type $\dummy*$"" and by assuming 
that $\dummy \cdot{1} \cT = \cT \cdot{1} \dummy = \cT$ 
for every "depth-$1$ type" $\cT$.
\end{itemize}

\begin{definition}\label{def:depth2-type}
Let $\cL,\cR$ be some (possibly "dummy") "depth-$1$ types".
The ""depth-$2$ $\varphi$-type"" of an interval $I$
with ""left@left context"" and ""right contexts"" $\cL,\cR$
is the tuple $\type{2}{\varphi,\cL,\cR}(I)=(\cL,\cR,\cT,\sB,\sE)$, where 
\begin{itemize}
  \item $\cT = \type{1}{}(I)$,
  \item $\sB = \{ \alpha\in\Depth{1}{\varphi} \::\: 
                  \exists J\prefix I ~~
                  \cL \cdot{1} \type{1}{}(J) \vdash \alpha \big\}$,
  \item $\sE = \{ \alpha\in\Depth{1}{\varphi} \::\:
                  \exists J\suffix I ~~ 
                  \type{1}{}(J) \cdot{1} \cR \vdash \alpha \big\}$.
\end{itemize}
\end{definition}

We give some intuition about the components of a "depth-$2$ $\varphi$-type" 
(the reader can also refer to Figure \ref{fig:depth2-type}).
The component $\cT$ is nothing but the "depth-$1$ type" of the 
reference interval $I$, thus determining which formulas of "depth" 
at most $1$ "hold at" $I$. 
The components $\cL$ and $\cR$ represent the "depth-$1$ types" of 
some intervals adjacent to $I$, to the left and to the right respectively,
and will be used as "contexes" for an operation of
"composition@depth-$2$ composition".
The set $\sB$ represents which subformulas of $\varphi$ of "depth" 
at most $1$ "hold at" some intervals $I'$ that overlap $I$ to the left 
(i.e., such that $\min(I')\le\min(I)\le\max(I')<\max(I)$), provided that the 
"depth-$1$ type" of $K=I'\setminus I$ coincides with the "left context" $\cL$.
The set $\sE$ provides similar information for the intervals $I'$ 
that overlap $I$ to the right and such that $\type{1}{}(I'\setminus I) = \cR$.
As a special case, we observe that when $\cL=\cR=\dummy$, 
one could let $I'$ range over "prefixes" or "suffixes" of $I$, thus determining
which subformulas "hold at" "prefixes" and "suffixes" of the reference interval $I$.
In particular, this can be used to determine the evaluation of $\varphi$
at $I$, and generalizes the second attempt of definition of type 
that we discussed earlier.

\begin{figure}[t]
\begin{minipage}[b][][b]{0.5\textwidth}
\begin{tikzpicture}[scale=0.85]
\begin{scope}
\coordinate (A) at (0,0);
\coordinate (B) at (1.5,0);
\coordinate (C) at (3,0);
\coordinate (D) at (3.5,0);
\coordinate (E) at (4,0);
\coordinate (F) at (3,0);
\coordinate (G) at (3.5,0);
\coordinate (H) at (4,0);
\coordinate (I) at (5.5,0);
\coordinate (J) at (7,0);
\draw [dashed] (A) -- (B) node [above=0.5mm, midway] {$\cL$};
\draw [|-|] (B) -- (I) node [rectangle, fill=white, midway] {$\cT$};
\draw [dashed] (I) -- (J) node [below=1mm, midway] {$\cR$};
\draw [nicecyan, below round brace] ([yshift=-7] A) -- ([yshift=-7] C)
      node [below, midway, yshift=-4] {$\alpha_1$};
\draw [nicecyan, below round brace] ([yshift=-18] A) -- ([yshift=-18] D)
      node [below, midway, yshift=-4] {$\alpha_2$};
\draw [nicecyan, below round brace] ([yshift=-29] A) -- ([yshift=-29] E)
      node [below, midway, yshift=-4] {$\alpha_3$};
\draw [nicecyan] ([xshift=-10, yshift=-20] A) node {\Large $\sB$};
\draw [nicegreen, above round brace] ([yshift=7] H) -- ([yshift=7] J)
      node [above, midway, yshift=4.5] {$\alpha_1$};
\draw [nicegreen, above round brace] ([yshift=18] G) -- ([yshift=18] J)
      node [above, midway, yshift=4.5] {$\alpha_2$};
\draw [nicegreen, above round brace] ([yshift=29] F) -- ([yshift=29] J)
      node [above, midway, yshift=4.5] {$\alpha_3$};
\draw [nicegreen] ([xshift=10, yshift=20] J) node {\Large $\sE$};
\end{scope}
\end{tikzpicture}
\captionof{figure}{Components of a "depth-$2$ type".}\label{fig:depth2-type}
\end{minipage}%
\begin{minipage}[b][][b]{0.5\textwidth}
\begin{flushright}
\begin{tikzpicture}[scale=0.85]
\begin{scope}
\coordinate (A) at (0,0);
\coordinate (B) at (1,0);
\coordinate (C) at (2,0);
\coordinate (D) at (2.5,0);
\coordinate (E) at (3,0);
\coordinate (F) at (4,0);
\coordinate (G) at (8,0);
\end{scope}
\begin{scope}[yshift=-60]
\coordinate (A') at (0,0);
\coordinate (B') at (4,0);
\coordinate (C') at (5,0);
\coordinate (D') at (5.5,0);
\coordinate (E') at (6,0);
\coordinate (F') at (7,0);
\coordinate (G') at (8,0);
\end{scope}
\begin{scope}[yshift=-120]
\coordinate (A'') at (0,0);
\coordinate (B'') at (1,0);
\coordinate (C'') at (2,0);
\coordinate (D'') at (2.5,0);
\coordinate (E'') at (3,0);
\coordinate (F'') at (4,0);
\coordinate (G'') at (5,0);
\coordinate (H'') at (5.5,0);
\coordinate (I'') at (6,0);
\coordinate (J'') at (7,0);
\coordinate (K'') at (8,0);
\end{scope}
\draw [dotted, gray, short] (B) -- (B'');
\draw [dotted, gray, short] (F) -- ([yshift=-35] F'');
\draw [dotted, gray, short] (F') -- (J'');
\draw [white, line width=8pt] (A) -- (G);
\draw [white, line width=8pt] (A') -- (G');
\draw [white, line width=8pt] (A'') -- (K'');
\draw [dashed] (A) -- (B) node [above=0.5mm, midway] {$\cL$};
\draw [|-|] (B) -- (F) node [above=0.5mm, midway] {$\cT$};
\draw [dashed] (F) -- (G) node [above=0.5mm, midway] {$\cR=\cT'\cdot{1}\cR'$};
\begin{scope}
\draw [white, line width=8pt, below round brace] 
      ([yshift=-7] A) -- ([yshift=-7] C);
\draw [white, line width=8pt, below round brace] 
      ([yshift=-14] A) -- ([yshift=-14] D);
\draw [white, line width=8pt, below round brace] 
      ([yshift=-21] A) -- ([yshift=-21] E);
\end{scope}
\draw [nicecyan, below round brace] 
      ([yshift=-7] A) -- ([yshift=-7] C);
\draw [nicecyan, below round brace] 
      ([yshift=-14] A) -- ([yshift=-14] D);
\draw [nicecyan, below round brace] 
      ([yshift=-21] A) -- ([yshift=-21] E);
\draw [nicecyan] ([xshift=-13, yshift=-15] A) node {\Large $\sB$};
\draw [dashed] (A') -- (B') node [above=0.5mm, midway] {$\cL'=\cL\cdot{1}\cT$};
\draw [|-|] (B') -- (F') node [above=0.5mm, midway] {$\cT'$};
\draw [dashed] (F') -- (G') node [above=0.5mm, midway] {$\cR'$};
\begin{scope}
\draw [white, line width=8pt, below round brace] 
      ([yshift=-7] A') -- ([yshift=-7] C');
\draw [white, line width=8pt, below round brace] 
      ([yshift=-14] A') -- ([yshift=-14] D');
\draw [white, line width=8pt, below round brace] 
      ([yshift=-21] A') -- ([yshift=-21] E');
\end{scope}
\draw [nicecyan, below round brace] 
      ([yshift=-7] A') -- ([yshift=-7] C');
\draw [nicecyan, below round brace] 
      ([yshift=-14] A') -- ([yshift=-14] D');
\draw [nicecyan, below round brace] 
      ([yshift=-21] A') -- ([yshift=-21] E');
\draw [nicecyan] ([xshift=-13, yshift=-15] A') node {\Large $\sB\,'$};
\draw [dashed] (A'') -- (B'') node [above=0.5mm, midway] {$\cL$};
\draw [|-|] (B'') -- (J'') 
      node [above=0.5mm, midway, fill=white, rectangle] {$\cT\cdot{1}\cT'$};
\draw [dashed] (J'') -- (K'') node [above=0.5mm, midway] {$\cR'$};
\draw [nicecyan, below round brace] 
      ([yshift=-7] A'') -- ([yshift=-7] C'');
\draw [nicecyan, below round brace] 
      ([yshift=-14] A'') -- ([yshift=-14] D'');
\draw [nicecyan, below round brace] 
      ([yshift=-21] A'') -- ([yshift=-21] E'');
\draw [nicecyan] ([xshift=-13, yshift=-15] A'') node {\Large $\sB$};
\draw [nicecyan, below round brace] 
      ([yshift=-35] A'') -- ([yshift=-35] F'');
\draw [nicecyan] ([xshift=-13, yshift=-37] A'') 
      node {\Large $\leftward{\cup~}\sB_\star$};
\draw [nicecyan, below round brace] 
      ([yshift=-49] A'') -- ([yshift=-49] G'');
\draw [nicecyan, below round brace] 
      ([yshift=-56] A'') -- ([yshift=-56] H'');
\draw [nicecyan, below round brace] 
      ([yshift=-63] A'') -- ([yshift=-63] I'');
\draw [nicecyan] ([xshift=-13, yshift=-60] A'') 
      node {\Large $\leftward{\cup~}\sB\,'$};
\end{tikzpicture}
\end{flushright}
\centering
\captionof{figure}{Composition of "depth-$2$ types".}\label{fig:depth2-composition}
\end{minipage}
\end{figure}

Below, we prove the analogous of Lemmas \ref{lem:depth1-composition} and 
\ref{lem:depth1-evaluation} for "depth"-$2$ types.

\begin{restatable}{lemma}{DepthTwoComposition}\label{lem:depth2-composition}
There is a ""composition@depth-$2$ composition"" operator $\cdot{2}$ 
on "depth-$2$ $\varphi$-types" that is computable in polynomial time 
and such that, for all "contexts" $\cL,\cL',\cR,\cR'$ and 
for all pairs of adjacent intervals $I,I'$, 
if $\cL \cdot{1} \type{1}{}(I) = \cL'$ and 
   $\type{1}{}(I') \cdot{1} \cR' = \cR$, then
\[
  \type{2}{\varphi,\cL,\cR}(I) \cdot{2} \type{2}{\varphi,\cL',\cR'}(I')
  ~=~ 
  \type{2}{\varphi,\cL,\cR'}(I\cup I') \ .
\]
\end{restatable}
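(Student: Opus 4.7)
Given two "depth-$2$ $\varphi$-types" $\tau = (\cL,\cR,\cT,\sB,\sE)$ and $\tau' = (\cL',\cR',\cT',\sB',\sE')$ that satisfy the compatibility conditions $\cL' = \cL \cdot{1} \cT$ and $\cR = \cT' \cdot{1} \cR'$ (both checkable from the tuples alone), I would define
\[
  \tau \cdot{2} \tau'
  ~:=~
  \big(\cL,\: \cR',\: \cT \cdot{1} \cT',\:
       \sB \cup \sB_\star \cup \sB',\:
       \sE' \cup \sE_\star \cup \sE\big),
\]
where $\sB_\star = \{\alpha \in \Depth{1}{\varphi} : \cL' \vdash \alpha\}$ and $\sE_\star = \{\alpha \in \Depth{1}{\varphi} : \cR \vdash \alpha\}$. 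On pairs that are not compatible, $\cdot{2}$ can be left undefined (or return a sink); only the compatible case is used in the statement. That the operator is polynomial-time computable is immediate: we invoke $\cdot{1}$ twice, compute two unions of subsets of $\Depth{1}{\varphi}$, and check $\vdash$ at most $2|\varphi|$ times, each in polynomial time by Lemma~\ref{lem:depth1-evaluation}.

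The correctness of this definition reduces to matching, component-by-component, the output against $\type{2}{\varphi,\cL,\cR'}(I\cup I')$. The context components $\cL,\cR'$ agree by construction. The depth-$1$ component agrees because $\type{1}{}(I \cup I') = \type{1}{}(I) \cdot{1} \type{1}{}(I') = \cT \cdot{1} \cT'$ by Lemma~\ref{lem:depth1-composition}. The nontrivial content is in the $\sB$ and $\sE$ components, which I would handle by a clean case split.

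For $\sB''$, every proper prefix $J$ of $I \cup I'$ shares its left endpoint with $I$, so it must fall into exactly one of three cases:
\emph{(i)} $\max(J) < \max(I)$, so $J \prefix I$ and $\cL \cdot{1} \type{1}{}(J) \vdash \alpha$ witnesses exactly the membership $\alpha \in \sB$;
\emph{(ii)} $J = I$, whence $\cL \cdot{1} \type{1}{}(J) = \cL \cdot{1} \cT = \cL'$, matching $\alpha \in \sB_\star$ (this case is always available because $I,I'$ are adjacent, so $\max(I) < \max(I')$);
\emph{(iii)} $\max(J) > \max(I)$, so $J = I \cup J'$ for some $J' \prefix I'$, giving $\cL \cdot{1} \type{1}{}(J) = \cL \cdot{1} \cT \cdot{1} \type{1}{}(J') = \cL' \cdot{1} \type{1}{}(J')$, which matches $\alpha \in \sB'$. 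The union over the three cases yields exactly $\sB \cup \sB_\star \cup \sB'$, as desired; see the three brackets in Figure~\ref{fig:depth2-composition}. The $\sE''$ component is handled by the mirror-image argument, using the compatibility $\cR = \cT' \cdot{1} \cR'$ to rewrite $\type{1}{}(J) \cdot{1} \cR'$ as $\type{1}{}(J'') \cdot{1} \cR$ when $J = J'' \cup I'$ for some proper suffix $J''$ of $I$.

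The only place where one must be careful is to make sure that the ``middle'' contributions $\sB_\star$ and $\sE_\star$ really correspond to the full intervals $I$ and $I'$ being used as a prefix/suffix of $I \cup I'$, and that the compatibility conditions on contexts are precisely what lets the contexts of $\tau$ and $\tau'$ be ``threaded together.'' I expect this bookkeeping, together with checking that the $\dummy$ type does not cause degenerate behaviour, to be the only mildly delicate point; everything else is a straightforward consequence of the definitions and Lemma~\ref{lem:depth1-composition}.
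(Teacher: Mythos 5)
Your proposal is correct and matches the paper's proof essentially verbatim: you define the same composition operator, with identical correction terms $\sB_\star$ and $\sE_\star$, and justify the $\sB$-component by the same three-way split (prefixes strictly inside $I$, the prefix $I$ itself, and prefixes spilling into $I'$), which the paper states equivalently in terms of the witnessing intervals overlapping $I\cup I'$ to the left. The symmetric treatment of $\sE$ and the polynomial-time computability argument also coincide with the paper's.
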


\begin{proof}
For the sake of brevity, let
$\sT = \type{2}{\varphi,\cL,\cR}(I)$ and
$\sT' = \type{2}{\varphi,\cL',\cR'}(I')$, 
where $\sT = (\cL,\cR,\cT,\sB,\sE)$, 
$\sT' = (\cL',\cR',\cT',\sB',\sE')$,
$\cL \cdot{1} \cT = \cL'$, and
$\cT' \cdot{1} \cR' = \cR$.
We define the "composition@depth-$2$ composition" as 
\[
  \sT \cdot{2} \sT'
  ~=~ 
  (\cL, \: \cR',\: \cT \cdot{1} \cT',\: \sB\cup\sB'\cup\sB_\star,\: \sE\cup\sE'\cup\sE_\star)
\]
where 
\begin{align*}
\cB_\star &~=~ \{\alpha\in\Depth{1}{\varphi} \::\: \cL'\vdash\alpha\} \\
\cE_\star &~=~ \{\alpha\in\Depth{1}{\varphi} \::\: \cR\vdash\alpha\}
\end{align*}
(see Figure \ref{fig:depth2-composition}).

Note that, thanks to Lemma \ref{lem:depth1-evaluation}, 
the "composition@depth-$2$ composition" $\sT \cdot{2} \sT'$
can be computed in polynomial time given the "types@depth-$2$ types"
$\sT$ and $\sT'$.

Below, we prove that the defined "composition@depth-$2$ composition" $\sT \cdot{2} \sT'$ 
is correct, namely, it coincides with $\type{2}{\varphi,\cL,\cR'}(I\cup I')$.
The latter "type@depth-$2$ type" is of the form $(\cL,\cR',\cT'',\sB'',\sE'')$,
so the first two components of $\sT \cdot{2} \sT'$ are clearly correct.
It remains to prove that $\cT'' = \cT \cdot{1} \cT'$, $\sB'' = \sB\cup\sB'\cup\sB_\star$,
and $\sE'' = \sE\cup\sE'\cup\sE_\star$.
By Lemma \ref{lem:depth1-composition} we have 
$\cT'' = \type{1}{}(I\cup I') 
 = \type{1}{}(I) \cdot{1} \type{1}{}(I') 
 = \cT \cdot{1} \cT'$.
Moreover, by Definition \ref{def:depth2-type}, 
$\sB''$ contains the formulas $\alpha\in\Depth{1}{\varphi}$ 
that satisfy one of the following conditions:
\begin{enumerate}
\item $I''\models\alpha$, for some interval $I''$ that overlaps $I$ to the left 
      (i.e., $\min(I'')\le\min(I)\le\max(I'')<\max(I)$) and such that 
      $\type{1}{}(I''\setminus I)=\cL$.
      
Letting $K=I''\setminus I$ and $J=I''\cap I$ and using Lemma \ref{lem:depth1-evaluation}, 
this condition is equivalent to 
\[
  \type{1}{}(I'') 
  ~=~ \type{1}{}(K) \cdot{1} \type{1}{}(J)
  ~=~ \cL \cdot{1} \type{1}{}(J) ~\vdash~ \alpha
\]
and hence to $\alpha\in\sB$.
\item $I''\models\alpha$, for some interval $I''$ that has $I$ as a "suffix" 
      (i.e., $\min(I'')\le\min(I)\le\max(I'')=\max(I)$) and such that
      $\type{1}{}(I''\setminus I)=\cL$.
      
Letting $K=I''\setminus I$ and using Lemma \ref{lem:depth1-evaluation},
together with the assumptions about the "contexts" $\cL$ and $\cL'$, 
this condition turns out to be equivalent to
\[
  \type{1}{}(I'') 
  ~=~ \type{1}{}(K) \cdot{1} \cT 
  ~=~ \cL \cdot{1} \cT 
  ~=~ \cL' ~\vdash~ \alpha
\]
and hence to $\alpha\in\sB_\star$.
\item $I''\models\alpha$, for some interval $I''$ that contains $I$, 
      overlaps $I'$ to the left
      (i.e., $\min(I'')\le\min(I)\le\max(I)<\max(I'')<\max(I')$), 
      and such that $\type{1}{}(I''\setminus I')=\cL$.

Letting $K=I''\setminus I'$ and $J=I'' \cap I'$, and using again
Lemma \ref{lem:depth1-evaluation} and the assumptions about the "contexts" 
$\cL$ and $\cL'$, this condition turns out to be equivalent to 
\begin{align*}
  \type{1}{}(I'') 
  &~=~ \type{1}{}(K) \cdot{1} \cT \cdot{1} \type{1}{}(J) \\
  &~=~ \cL \cdot{1} \cT \cdot{1} \type{1}{}(J) 
   ~=~ \cL' \cdot{1} \type{1}{}(J) ~\vdash~ \alpha
\end{align*}
and hence to $\alpha\in\sB'$.
\end{enumerate}
We have just shown that $\sB''=\sB\cup\sB'\cup\sB_\star$.
One proves $\sE''=\sE\cup\sE'\cup\sE_\star$ using symmetric arguments.
\end{proof}

\begin{restatable}{lemma}{DepthTwoEvaluation}\label{lem:depth2-evaluation}
For all intervals $I$ and $J$ such that
$\type{2}{\varphi,\dummy,\dummy}(I) = \type{2}{\varphi,\dummy,\dummy}(J)$
and for every $\BEpi$ formula $\varphi$ of "depth" at most $2$,
we have
$\cS,I\models\varphi$ iff $\cS,J\models\varphi$.
Moreover, whether $\cS,I\models\varphi$ holds can be decided in
polynomial time from the given "type@depth-$2$ type"
$\type{2}{\varphi,\dummy,\dummy}(I)$.
\end{restatable}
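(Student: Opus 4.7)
My plan is to prove both claims together by structural induction on $\varphi$, reading off the polynomial-time evaluation procedure from the recursion. Write the given type as $\type{2}{\varphi,\dummy,\dummy}(I)=(\dummy,\dummy,\cT,\sB,\sE)$, and note that, since $\dummy$ is the identity for $\cdot{1}$, the last two components simplify to
\[
  \sB \:=\: \{\alpha\in\Depth{1}{\varphi} \::\: \exists J\prefix I,\ \type{1}{}(J)\vdash\alpha\},
\]
\[
  \sE \:=\: \{\alpha\in\Depth{1}{\varphi} \::\: \exists J\suffix I,\ \type{1}{}(J)\vdash\alpha\}.
\]

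If $\varphi$ already has modal depth at most $1$, the component $\cT=\type{1}{}(I)$ determines $\cS,I\models\varphi$ by Lemma~\ref{lem:depth1-evaluation}, which also provides a polynomial-time test. Otherwise, if the outermost connective of $\varphi$ is Boolean, the semantics is homomorphic and the claim follows by applying the induction hypothesis to each immediate subformula, using that $\cT,\sB,\sE$ depend on $I$ and $\varphi$ and not on the outermost syntax, so the same type suffices to evaluate every subformula of $\varphi$.

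The only genuinely new case is that of a top-level modality $\hsB\alpha$ or $\hsE\alpha$ with $\alpha$ of depth at most $1$; such $\alpha$ automatically belongs to $\Depth{1}{\varphi}$ because $\varphi$ has depth at most $2$. By the definition of $\vdash$ (via Lemma~\ref{lem:depth1-evaluation}), the condition $\type{1}{}(J)\vdash\alpha$ is equivalent to $\cS,J\models\alpha$; hence $\cS,I\models\hsB\alpha$, i.e.\ some proper prefix of $I$ satisfies $\alpha$, is equivalent to $\alpha\in\sB$, a property of the given type alone. The case $\hsE\alpha$ is handled symmetrically using $\sE$.

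For the polynomial-time bound, observe that $\type{2}{\varphi,\dummy,\dummy}(I)$ has size polynomial in $|\varphi|+|\signature|$, since $\sB,\sE\subseteq\Depth{1}{\varphi}$ and $\cT$ is a depth-$1$ type. The induction above becomes a top-down walk on the syntax tree of $\varphi$ that, at each node, does constant Boolean work, a polynomial-time test against $\cT$ via Lemma~\ref{lem:depth1-evaluation} for depth-$\le 1$ subformulas, or a polynomial-time set-membership query in $\sB$ or $\sE$ for outermost modalities; summed over the polynomially many subformulas of $\varphi$, this is polynomial overall. I do not foresee a real obstacle; the only subtle point is checking that, whenever a modality $\hsB\alpha$ or $\hsE\alpha$ is reached during the recursion, its argument $\alpha$ indeed lies in $\Depth{1}{\varphi}$, which is immediate from the hypothesis that $\varphi$ has depth at most $2$.
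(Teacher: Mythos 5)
Your proof is correct and follows essentially the same route as the paper's: depth-$\le 1$ subformulas are evaluated from the component $\cT$ via Lemma~\ref{lem:depth1-evaluation}, outermost modalities $\hsB\alpha$ and $\hsE\alpha$ reduce to membership of $\alpha$ in $\sB$ and $\sE$ (using that $\dummy$ is the identity for $\cdot{1}$), and Boolean connectives are handled homomorphically. The only cosmetic issue is that you reuse the letter $J$ both for the second interval in the statement and for the bound prefix/suffix variable in your displayed formulas; otherwise nothing to add.
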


\begin{proof}
Let $\varphi$ be a formula of "depth" at most $2$
and let $I$ be an interval with "depth-$2$ type"
$(\dummy,\dummy,\cT,\sB,\sE)$, where both "left@left context" 
and "right contexts" are $\dummy$.

If $\varphi$ has "depth" smaller than $2$, 
then by Lemma \ref{lem:depth1-evaluation}
the component $\cT=\type{1}{}(I)$  
already determines (effectively in polynomial time)
whether $\cS,I\models\varphi$.

Otherwise, if $\varphi$ has "depth" $2$ 
and is of the form $\hsB\alpha$, then 
$\cS,I\models\hsB\alpha$ iff there is a "proper prefix" $J$ 
of $I$ such that $\cS,J\models\alpha$.
Since $\alpha\in\Depth{1}{\varphi}$, the latter condition 
is equivalent to $\dummy \cdot{1} \type{1}{}(J) \vdash \alpha$, 
and hence $\cS,I\models\hsB\alpha$ iff $\alpha\in\sB$.
The case of $\varphi=\hsE\alpha$ is similar, but uses 
the component $\sE$.

Finally, Boolean combinations of the previous formulas are 
"evaluated" homomorphically. 
\end{proof}

\knowledge{notion}
    | \encodedword*{\cS}
	| encoding
	| encodings
	| encodes
	| encode
	| encoded
\knowledge{notion}
    | \sT \vdashbis* \psi
    | \vdashbis
\knowledge{notion}
	| dummy depth-$2$ type
	| dummy depth-$2$ types
\knowledge{notion}
	| \hsG
	| \hsG*

\subsection{Satisfiability procedure}\label{subsec:satisfiability}

As a warm-up, let us first describe the "satisfiability" procedure for a formula
of "depth" at most $2$; later we will generalize this to a formula in "shallow normal form".

Let us fix a $\BEpi$ formula $\psi$ of "depth" at most $2$.
Deciding "satisfiability" of $\psi$ can be done in polynomial space, 
by reducing to non-emptiness of a language recognized by a suitable finite state automaton. 
\AP
To formalize the construction of the automaton from the given formula $\psi$, 
it is convenient to encode an "interval structure" $\cS=\intervalstructure$ 
over the "signature" $\signature$ by the finite word 
$""\encodedword*{\cS}"" = a_0 \dots a_{\max(N)}$ over the alphabet 
$\wp(\signature)$, where $a_i = \sigma(i)$ for all $i\in \timedomain$ 
(recall that $\timedomain$ is a finite prefix of the natural numbers). 

\begin{restatable}{lemma}{Automaton}
\label{lem:automaton}
Given a $\BEpi$ formula $\psi$ of "depth" 
at most $2$, one can compute in polynomial space%
\footnote{By computing an automaton in polynomial space we mean that 
    its initial states, final states, and transitions can be enumerated
    in polynomial space.
    The enumeration procedures can be used within other algorithms of 
    similar complexity, e.g., to test emptiness of the recognized language.}
a finite state automaton $\cA_\psi$ that accepts 
all and only the "encodings" $\encodedword{\cS}$ of the "interval structures" $\cS$ 
such that $\cS,I\models\psi$, where $I$ is the largest interval of $\cS$.
\end{restatable}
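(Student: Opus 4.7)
The plan is to construct a non-deterministic finite automaton $\cA_\psi$ whose states encode depth-$2$ $\psi$-types of prefix intervals, exploiting the compositionality of Lemma~\ref{lem:depth2-composition} and the evaluation property of Lemma~\ref{lem:depth2-evaluation}. A non-initial state corresponds to a tuple $\type{2}{\psi,\dummy,\cR}([0,k])$ for some prefix $[0,k]$ of the "time domain" and a non-deterministically guessed right context $\cR$ (a depth-$1$ type intended to represent the depth-$1$ type of the unread suffix $[k+1,\max(\timedomain*)]$). Since depth-$2$ $\psi$-types are tuples of polynomial size in $|\psi|$ (the contexts and $\cT$-component are depth-$1$ types, and the sets $\sB, \sE$ are subsets of $\Depth{1}{\psi}$), each state has a polynomial-size description.

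Transitions simulate the stepwise extension of the prefix by a singleton. Upon reading letter $a_{k+1}\in\wp(\signature*)$, the automaton \emph{(i)} computes the depth-$1$ type $\cT_{k+1}$ of $\{k+1\}$ directly from $a_{k+1}$; \emph{(ii)} non-deterministically guesses a new right context $\cR'$ for $[0,k+1]$; \emph{(iii)} enforces the consistency condition $\cR = \cT_{k+1}\cdot{1}\cR'$ required to apply Lemma~\ref{lem:depth2-composition}; and \emph{(iv)} emits as next state the $\cdot{2}$-composition of the current state with $\type{2}{\psi,\cT,\cR'}(\{k+1\})$, where $\cT=\type{1}{}([0,k])$ is extracted from the current state and the latter singleton type is assembled directly from $a_{k+1}$, $\cT$, and $\cR'$ using Lemma~\ref{lem:depth1-composition} together with the definition of depth-$2$ types. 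A distinguished initial state, on reading the first letter $a_0$, transitions to states representing the depth-$2$ $\psi$-type of $\{0\}$ for each guessed right context. Accepting states are those whose right context is $\dummy$ (so no suffix remains) and whose depth-$2$ type entails $\psi$; by Lemma~\ref{lem:depth2-evaluation} this last check is polynomial-time.

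Correctness follows by a straightforward induction on the prefix length: the chain of consistency conditions, together with the requirement that the final right context be $\dummy$, forces each guessed $\cR$ to coincide with the actual depth-$1$ type of the unread suffix, and Lemma~\ref{lem:depth2-composition} then yields that the reachable state after reading $a_0\dots a_k$ is exactly $\type{2}{\psi,\dummy,\cR}([0,k])$ under this forced $\cR$; the final acceptance check, by Lemma~\ref{lem:depth2-evaluation}, captures precisely $\cS,I\models\psi$ for $I$ the largest interval. For the space bound, each state has polynomial-size description and both $\cdot{2}$ and the entailment predicate $\vdash$ are polynomial-time computable, so the initial states, accepting states, and transitions of $\cA_\psi$ can be enumerated in polynomial space. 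The main subtlety to get right is the guess-and-verify discipline for the right context: without tying each $\cR$ at step $k$ to the next singleton's depth-$1$ type and to the next guess $\cR'$ through the composition lemma's hypothesis, Lemma~\ref{lem:depth2-composition} would not apply, the depth-$2$ type would fail to propagate correctly, and the automaton would accept spurious words.
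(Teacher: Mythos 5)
Your proposal is correct and follows essentially the same route as the paper: an unambiguous nondeterministic automaton whose states carry the depth-$2$ $\psi$-type of the prefix read so far together with a guessed right context, with transitions enforcing the consistency condition $\cR = \cT_{k+1}\cdot{1}\cR'$ so that Lemma~\ref{lem:depth2-composition} applies, and acceptance requiring a $\dummy$ right context (which forces all guesses to be correct) plus entailment of $\psi$ via Lemma~\ref{lem:depth2-evaluation}. The paper packages the same idea as a cascade of a deterministic prefix-type automaton, a co-deterministic suffix-type automaton (your guess-and-verify of $\cR$), and a deterministic depth-$2$-type automaton, but the states, transitions, acceptance condition, inductive correctness argument, and polynomial-space enumeration are the same as yours.
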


\begin{proof}[Proof sketch]
The construction of $\cA_\psi$ is quite standard, 
as it is the cascade product of three automata:
\begin{enumerate}
\item a deterministic automaton that 
      computes in its states the "depth-$1$ types" of intervals 
      corresponding to "prefixes" of the input,
\item a co-deterministic automaton that 
      computes in its states the "depth-$1$ types" of intervals 
      corresponding to "suffixes" of the input,
\item a deterministic automaton that 
      computes the "depth-$2$ $\psi$-type" of "prefixes" 
      of the input, with a constant "dummy" "left context" 
      and "right contexts" given by the states of the 
      previous automaton.
\end{enumerate}
Transitions of these automata are defined using compositional
properties of "depth-$1$@depth-$1$ types" and "depth-$2$ types" 
(Lemmas \ref{lem:depth1-composition} and \ref{lem:depth2-composition}). 

Below, we provide full details for the construction of $\cA_\psi$.
\AP
Like we have done for "depth-$1$ types", we introduce 
""dummy depth-$2$ types"" for abstracting an empty interval:
these are tuples of the form $(\cL,\cR,\dummy,\emptyset,\emptyset)$, 
where $\cL$ and $\cR$ are "left@left context" and "right contexts"
and $\dummy$ is the "dummy depth-$1$ type"
(of course, there is exactly one "dummy depth-$2$ type" for each
choice of the "left@left context" and "right contexts"). As usual, 
a "dummy type" behaves as an identity w.r.t.~composition
with a "depth-$2$ type", provided the "contexts" are compatible.
We shall also use a generalization of the relation $\vdash$ that 
works with "depth-$2$ types". 
\AP
Precisely, given a "depth-$2$ type" $\sT$,
we write $""\sT \vdashbis* \psi""$ 
whenever $\cS,I\models\psi$ for some (or, equally, for every) 
interval $I$ such that $\type{2}{\varphi,\dummy,\dummy}(I)=\sT$.
\begin{itemize}
\item the alphabet $A$ consists of subsets of the "signature" $\signature$;
\item the state space $Q$ consists of triples $q=(\cL,\cR,\sT)$, where 
      $\cL,\cR$ are a "depth-$1$ types" and 
      $\sT$ is a "depth-$2$ $\psi$-type"
      with $\dummy$ as "left context" and $\cR$ as "right context";
\item the set $I$ of initial states consists of triples $q=(\cL,\cR,\sT)$, 
      where $\cL=\dummy$ is the "dummy depth-$1$ type" and 
      $\sT=(\cL,\cR,\dummy,\emptyset,\emptyset)$ 
      is a "dummy depth-$2$ type";
\item the set $F$ of final sates consists of triples 
      $q=(\cL,\cR,\sT)$,
      with $\cR=\dummy$ and $\sT \vdashbis \psi$;
\item the set $T$ of transition rules consists of the 
      triples $(q,a,q')$, with       
      $q=(\cL,\cR,\sT)$, $a\subseteq\signature$, and $q'=(\cL',\cR',\sT')$,
      such that
      $\cL' = \cL \cdot{1} \type{1}{}(I_a)$,
      $\cR = \type{1}{}(I_a) \cdot{1} \cR$, and
      $\cT' = \sT \cdot{2} \type{2}{\psi,\cL,\cR'}(I_a)$,   
      where $I_a$ denotes the singleton interval labelled by 
      the set $a$ of propositional letters.
\end{itemize}
It is worth noting that the automaton $\cA_\psi$ is unambiguous, 
namely, it admits at most one successful run on each input. 

We now claim that, on every input $\encodedword{\cS} = a_0\dots a_{n-1}$, 
the only possible runs of $\cA_\psi$ that start and end in arbitrary states
(not necessarily initial or final ones) are of the form 
\[
  q_0 \trans{a_0} q_1 \trans{a_1} \dots\dots \trans{a_{n-1}} q_n
\]
with $q_i=(\cL_i,\cR_i,\sT_i)$ 
such that, for all $i=0,\dots,n$,
\begin{enumerate}
\item $\cL_i = \cL_0 \cdot{1} \type{1}{}([0,i-1])$,
\item $\cR_i = \type{1}{}([i,n-1]) \cdot{1} \cR_n$,
\item $\sT_i = \sT_0 \cdot{2} \type{2}{\psi,\dummy,\cR_i}([0,i-1])$.
\end{enumerate}
Each of the above properties can be verified using a simple induction,
either from smaller to larger $i$'s or vice versa (we omit the tedious details).

From the properties stated in items 1., 2., 3.~and 
the definitions of initial and final states, it immediately follows that 
$\cA_\psi$ admits a successful run on $\encodedword{\cS}$ 
if and only if $\cS,[0,n-1]\models\psi$.

Finally, as for the complexity of constructing $\cA_\psi$, we
recall from Lemmas \ref{lem:depth1-composition} and \ref{lem:depth2-composition}
that "depth-$1$@depth-$1$ types" and "depth-$2$ types" 
can be enumerated in polynomial space, and can be 
"composed@depth-$1$ composition" in polynomial time. 
This implies that the initial states and the transitions of $\cA_\psi$ 
can be enumerated in polynomial space.
To enumerate the final states, it suffices to test properties like 
$\sT \vdashbis \psi$, for a given "depth-$2$ type" $\sT$. This can 
be done in polynomial time thanks to Lemma \ref{lem:depth2-evaluation}.
\end{proof}

The fact that the automaton $\cA_\psi$ above 
can be constructed from $\psi$ in polynomial space, implies that (non-)emptiness of 
the recognized language can also be decided in polynomial space w.r.t.~$|\psi|$. 
In its turn, this shows that the satisfiability of a $\BEpi$ formula 
$\psi$ of "depth" at most $2$ can be decided in polynomial space.

\medskip
To conclude the proof of Theorem \ref{thm:complexity} it remains to 
reduce the satisfiability problem for a $\BEpi$ formula $\psi$ in 
"shallow normal form" 
to the non-emptiness problem of an automaton $\cA_\psi$ 
that is computable from 
$\psi$ in exponential space.
For this, it suffices to recall that $\psi$ must be of the form 
$\varphi \:\wedge\: \hsGu\xi$, where both $\varphi$ and $\xi$ 
are $\BEpi$ formulas of "depth" at most $2$.
One uses Lemma \ref{lem:automaton} to construct the automata
$\cA_\varphi$ and $\cA_{\neg\xi}$, whose languages contain 
"encodings" of "models" of $\varphi$ and $\neg\xi$, respectively.
From $\cA_{\neg\xi}$, one can efficiently construct an automaton
$\cA_{\hsG\neg\xi}$ recognizing the language of words with 
infixes accepted by $\cA_{\neg\xi}$, thus "encoding" "models" of 
\AP
$\mathop{""\hsG*""}\neg\xi$ ($=\neg\hsGu\xi$).
One then complements the latter automaton to obtain an automaton 
$\cA_{\hsGu\xi}$ accepting the "encodings" of "models" of $\hsGu\xi$. 
Note that the latter step can be performed in exponential space in 
the size of $\xi$, by using an online version of the classical 
subset construction.
Finally, one computes the product of the automata $\cA_\varphi$ and 
$\cA_{\hsGu\xi}$, so as to recognize the language of "encodings" of 
"models" of $\psi = \varphi \:\wedge\: \hsGu\xi$.
It follows that non-emptiness of the latter language can be decided 
in exponential space w.r.t.~the size of the original formula $\psi$.
\qed


\section{Conclusions}\label{sec:conclusions}
We have settled the question of whether the logic $\BE$, 
interpreted over "homogeneous" "interval structures", 
admits an elementary "satisfiability" problem.
We have actually answered the question by giving an 
optimal $\expspace$ decision procedure
($\expspace$-hardness was shown in \cite{DBLP:journals/tcs/BozzelliMMPS19}).
As a by-product result, we have also devised a 
"normal form@shallow normal form" 
for $\BE$ formulas that enforces a small bound to 
the number of nested modalities, 
while "preserving satisfiability@equi-satisfiable". 
Quite suprisingly such a "normal form@shallow normal form" 
can be computed in polynomial time from arbitrary 
$\BE$ formulas, using a series of rewriting steps 
reminiscent of a quantifier elimination technique 
a-la Scott. 

As for future work, one could try to see whether 
similar techniques are applicable to extensions 
of $\BE$ with modalities based on other Allen's
interval relations (e.g., overlap, meet, the 
inverses of the "prefix" and "suffix" relations, etc.).


\bibliographystyle{alpha}
\bibliography{bib2}

\end{document}